\newcommand{\de}[1]{\left( #1 \right)}
\newcommand{\De}[1]{\left[#1\right]}
\newcommand{\DE}[1]{\left\{#1\right\}}
\newcommand{\ketbra}[2]{\left|#1\middle\rangle\middle\langle#2\right|}
\newcommand{\Hi}{\mathcal{H}}
\newcommand{\I}{\mathds{1}}
\newcommand{\OI}{I}
\newcommand*{\coloneqq}{\mathrel{\vcenter{\baselineskip0.5ex \lineskiplimit0pt \hbox{\scriptsize.}\hbox{\scriptsize.}}} =}
\DeclareMathOperator{\Tr}{Tr}
\newcommand{\beq}{\begin{equation}}
\newcommand{\eeq}{\end{equation}}
\newtheorem{theorem}{Theorem}
\newtheorem{proposition}[theorem]{Proposition}
\newtheorem{definition}[theorem]{Definition}
\newtheorem{lemma}[theorem]{Lemma}
\newtheorem{corollary}[theorem]{Corollary}
\newcommand{\be}{\begin{equation}}
\newcommand{\ee}{\end{equation}}
\newcommand{\ben}{\begin{eqnarray}}
\newcommand{\een}{\end{eqnarray}}
\newcommand\myequ[1]{\mathrel{\overset{\makebox[0pt]{\mbox{\normalfont\tiny\sffamily #1}}}{=}}}
\begin{document}

\title{Genuine-multipartite entanglement criteria based on positive maps}
\author{Fabien Clivaz} 
\affiliation{Group of Applied Physics, University of Geneva, 1211 Geneva 4, Switzerland}
\author{Marcus Huber}
\affiliation{Institute for Quantum Optics and Quantum Information (IQOQI), Austrian Academy of Sciences, Boltzmanngasse 3, A-1090 Vienna, Austria}
\author{Ludovico Lami}
\affiliation{F\'isica Te\`orica: Informaci\'o i Fen\`omens Qu\`antics, Universitat Aut\`onoma de Barcelona, ES-08193 Bellaterra (Barcelona), Spain}
\author{Gl\'aucia Murta}
\affiliation{QuTech, Delft University of Technology, Lorentzweg 1, 2628 CJ Delft, the Netherlands}
\affiliation{Departamento de F\'isica, Universidade Federal de Minas Gerais, Caixa Postal 702, 30123-970, Belo Horizonte, MG, Brazil}

\pacs{03.65.Ud}

\begin{abstract}
Positive maps applied to a subsystem of a bipartite quantum state constitute a central tool in characterising entanglement. In the multipartite case, however, the direct application of a positive but not completely positive map cannot distinguish if a state is genuinely multipartite entangled or just entangled across some bipartition. We thus generalise this bipartite concept to the multipartite setting by introducing non-positive maps that are positive on the subset of bi-separable states, but can map to a non-positive element if applied to a genuine multipartite entangled state. We explicitly construct examples of multipartite non-positive maps, obtained from positive maps via a lifting procedure, that in this fashion can reveal genuine multipartite entanglement in a robust way.
\end{abstract}
\maketitle 

\section{Introduction}

Due to the importance of entanglement as a resource for quantum information processing, the task of determining whether a quantum state is 
entangled or not plays a crucial role for the theoretical and practical developments of the field. 
While the usefulness of bipartite entanglement is well established, the development of applications using entanglement in multipartite systems
is still in its early stages, however, it has already been shown to play a fundamental role for universal quantum computation
in the measurement-based quantum computation paradigm \cite{MBQC} and for cryptographic tasks such as secret sharing \cite{secretsharing}. 

Concerning the characterisation of entanglement, one of the biggest challenges to start with is the fact that it is NP-hard in the Hilbert space dimension to decide whether a given quantum state is entangled \cite{Gurvits}. Since the dimension grows exponentially in the number of involved parties, the exact answer to that question will probably remain elusive for many-body systems.

Nonetheless there exists an abundance of necessary separability criteria capable of certifying entanglement in a practically satisfying way. For bipartite entanglement, the two most paradigmatic techniques are positive maps which are not completely positive \cite{HorodeckiCPmaps} and entanglement witnesses \cite{bruss}. These two concepts are intimately related, as every positive map directly leads to a multitude of entanglement witnesses and every witness can be associated to a positive map \cite{HorodeckiCPmaps}.
Both concepts are sufficient for the characterisation of entanglement, in the sense that, for every entangled state, there exists both a positive map and an entanglement witness certifying its entanglement \cite{HorodeckiCPmaps, bruss}.

Turning to multipartite systems, the potentially diverse separability structures add a layer of complexity to the detection of entanglement. As opposed to detecting any bipartite entanglement in multipartite systems \cite{horodecki2}, mixed multipartite states pose another fundamental challenge. Counterintuitively, multipartite systems can exhibit entanglement across every partition, yet still not feature any genuine multipartite entanglement (GME). To certify GME, an abundance of entanglement witnesses were derived (see e.g. reviews \cite{tothguhne,siewerteltschka}) and a characterisation in terms of semi-definite programs (SDP) was developed \cite{guhnetaming,relaxations}. 
There is however no direct correspondence to the concept of positive map based criteria, not even for the undoubtedly most used criterion of
positivity under partial transposition (PPT). Indeed, the naive application of a positive map to a subsystem in correspondence to the 
bipartite case must inevitably fail, as all it can reveal is entanglement across that bipartition, which, as we just mentioned, is never enough to 
infer GME.

In this manuscript we expand the notion of positive map based criteria from the bipartite to the multipartite case, thus filling a gap in the set of available tools. Our main idea is to derive maps which are positive on all biseparable states, whilst not positive on the set of all states. We first give a general description of the approach and then
showcase 
some exemplary maps. The maps we derive are based on convex combinations of positive maps that are used in the bipartite case, thus generalising the PPT criterion and others to the multipartite case.

The manuscript is organised as follows: In Section \ref{sec.preliminaries} we set the stage by recalling some important bipartite and multipartite entanglement definitions and giving the motivation of this work.
In Section \ref{sec.results} we introduce our framework and general construction method based on positive maps. 
In Section \ref{sec.examples} we give some explicit examples of our method. In particular, a remarkably simple partial transposed based criteria is worked out,
as well as shown to be extendable to other maps such as the Reduction and the Breuer-Hall map, and a Choi map construction is shown to robustly detect noisy $n$-qudit GHZ-like states as well as PPT states. Finally in Section \ref{sec.discussion} we discuss possible extensions and applications of our method.

\section{Preliminaries}\label{sec.preliminaries}

To set the stage and notation we first remind the reader of the concepts of separability, entanglement witnesses and positive maps in the bipartite case. A finite dimensional state is called separable iff it can be decomposed into a convex combinations of pure product states, i.e.
\begin{align}\label{eqsep}
\rho_{\text{sep}}\coloneqq\sum_ip_i|\phi^i_A\rangle\langle\phi^i_A|\otimes|\phi^i_B\rangle\langle\phi^i_B|\,,
\end{align}
where $\{p_i\}$ is a probability distribution.

The states which can be written in the form of  Eq. \eqref{eqsep} form a closed convex set and therefore, as a consequence of the Hahn-Banach theorem (see \cite{HorodeckiCPmaps}), the set of separable states 
can be separated from any point in its complement by a hyperplane that can be written as $\text{Tr}(\rho W)=0$ 
for a self-adjoint operator (observable) $W=W^\dagger$.
Now if (by convention) $\text{Tr}(\rho_{\text{sep}}W)\geq0\) for all \(\rho_{\text{sep}}$ and there exits at least one (entangled) state $\rho$ 
such that $\text{Tr}(\rho W)<0$, the operator $W$ is referred to as an \emph{entanglement witness}. The above thus states that any entangled state can be detected by an entanglement witness $W$.

From another perspective a linear positive map $\Lambda$, $\Lambda[\alpha \rho+\beta \sigma]=\alpha \Lambda[\rho]+\beta\Lambda[\sigma]$, $\Lambda[\rho]\geq 0 \; \forall\, \rho\geq 0$, can
be used to detect entanglement if it is not completely positive, since the application of a non-completely positive map to an 
entangled state $\rho_{AB}\geq 0$ can result in a non-positive operator, $\Lambda_A\otimes\mathbb{I}_B[\rho_{AB}]\ngeq 0$.
The fact that the extension of a positive map remains positive on separable states can be easily seen by applying it 
to the state $\rho_{\text{sep}}$ in Eq. \eqref{eqsep}.
The equivalence of the two approaches was established in Ref. \cite{HorodeckiCPmaps}, where the authors had proven that
for every entangled state $\rho_{AB}$ there exists a positive but 
non-completely positive map, $\Lambda_A$, whose extension applied to $\rho_{AB}$ maps it into an operator with negative eigenvalues.

It is straightforward to generate a witness out of a positive map: if $\Lambda_A\otimes\OI_B[\rho_{AB}]\ngeq 0$,
it implies that there exists a pure state $|\psi\rangle$ such
that $\text{Tr}(|\psi\rangle\langle\psi|\Lambda_A\otimes\OI_B[\rho_{AB}])<0$.
Now, by invoking the concept of the dual of a map \(\Lambda^*\), uniquely defined by 
$\text{Tr}(\rho \Lambda[\rho'])=\text{Tr}(\Lambda^*[\rho]\rho'),\; \forall \rho, \rho'$, it can 
immediately be seen that $\Lambda^*_A\otimes\OI_B[|\psi\rangle\langle\psi|]$ is an entanglement witness detecting \(\rho_{AB}\).
{The converse construction goes as follows. Every witness $W$ is a block-positive operator, i.e. it is positive on product vectors: $\langle \alpha \beta | W |\alpha \beta\rangle\geq 0$. This implies the positivity of the map $\Lambda$ whose Choi matrix is $W$, defined in the computational basis (up to a constant) by $\Lambda(|i\rangle\!\langle j|)=\sum_{l,m} \langle li | W | mj \rangle \, |l\rangle\!\langle m|$.
If $W$ detects $\rho$, then $0>\text{Tr} (W\rho) = \text{Tr}\,\de{ (\Lambda\otimes I)(|\varepsilon\rangle\!\langle \varepsilon|)\, \rho} = \langle \varepsilon | (\Lambda^{*}\otimes I)(\rho) |\varepsilon\rangle$, where $|\varepsilon\rangle = \sum_{i=1}^{d} |ii\rangle$ is the (unnormalised) maximally entangled state on the bipartite system $AB$. Thus, we see that the positive map $\Lambda^{*}$ reveals the entanglement of $\rho$. For details, we refer the reader to \cite{HorodeckiCPmaps}.}

To appreciate the challenges in the multipartite case, we first review the different levels of separability a system can exhibit.
The strongest notion of separability is the complete absence of any type of entanglement, i.e. \emph{full separability}.
An $n$-partite finite dimensional quantum state $\rho_{sep} \in \mathcal{P}(\Hi_1\otimes \ldots \otimes \Hi_n)$ is called fully-separable, and will be denoted by \(\rho_{\text{sep}}\), iff it can be decomposed as
\begin{align}\label{eq.sep}
\rho_{sep}= \sum_i p_i \; \rho^i_1\otimes \ldots \otimes \rho^i_{n} ,
\end{align}
where as before \(\{p_i\}\) is a probability distribution.
If a quantum state $\rho$ cannot be decomposed into the form \eqref{eq.sep} there must be some entanglement in the system. This concept of separability can also be revealed in terms of general linear maps \cite{horodecki2}.
However, as we have mentioned before, for multipartite systems, many levels of separability can exist, according to how many subsystems share 
entanglement. We now present the weakest notion of separability, usually referred to as \emph{biseparability}:
Let $\rho_{2-sep} \in \mathcal{P}(\Hi_1\otimes \ldots \otimes \Hi_n)$ be an $n$-partite   quantum state and let $A  \subset \DE{1,\ldots,n}$ denote
a proper subset of the parties. A state $\rho_{2-sep}$ is biseparable iff it can be decomposed as
\begin{align}\label{eq.2sep}
\rho_{2-sep}=\sum_{A} \sum_i p^i_A \; \rho^i_A\otimes \rho^i_{\bar{A}}, \quad p^i_A\geq 0,\;  \sum_A \sum_{i} p^i_A =1,
\end{align}
where $\rho_A$ denotes a quantum state for the subsystem defined by the subset $A$ and $\sum_A$ stands for the sum over all bipartitions $A \lvert \bar{A}$.

An $n$-partite state which cannot be decomposed as \eqref{eq.2sep} is called \emph{genuine $n$-partite entangled}. If the number of parties is clear from the context, 
we call those states genuine multipartite entangled (GME). 
Note that since the biseparable states form a convex closed set, the Hahn-Banach theorem ensures that given any GME-state there exists a GME-witness detecting it.
Just as in the bipartite case, the witnesses of genuine multipartite entanglement (\emph{GME-witnesses})
are defined by hermitian operators $W_{\text{GME}}$ that, for all $\rho_{2-sep}$, fulfil  $\text{Tr}(\rho_{2-sep}W_{\text{GME}})\geq0$ and for which
there exists a multipartite state $\rho$ such that $\text{Tr}(\rho W_{\text{GME}})<0$. 

However, as mentioned in the introduction, positive maps  fail to capture the concept of genuine multipartite entanglement for a simple reason: applying a positive map to
any marginal subsystem
$A$ of a biseparable state does not necessarily result in a positive operator, as 
it only guarantees positivity for the portion of the state which is separable with respect to the partition $A|\bar{A}$.

Let us remark that, for bipartite systems,  the crucial point of entanglement criteria based on  positive maps \(\Lambda\) is the following:
If we consider their extension \(\Phi\coloneqq \Lambda_A \otimes \OI_B\) they become non-positive maps which are nonetheless positive on all separable states, see Fig. \ref{diagram}. 
This is the point of view that allows for a straightforward multipartite generalisation. The objects we are thus looking for are non-positive maps $\Phi_{\text{GME}}$ such that
\begin{align}
\label{equ_GME-map}
\Phi_{\text{GME}}[\rho_{2-sep}]\geq 0,\quad \forall\,\rho_{2-sep}\,.
\end{align}
We call those maps GME-maps. In the remaining of the manuscript we explore the possibility of constructing such maps from the lifting of positive maps (see Fig.\ref{diagram} for an illustration of our concept).

\begin{figure}[t]
\begin{center}
\includegraphics[width=0.3\columnwidth]{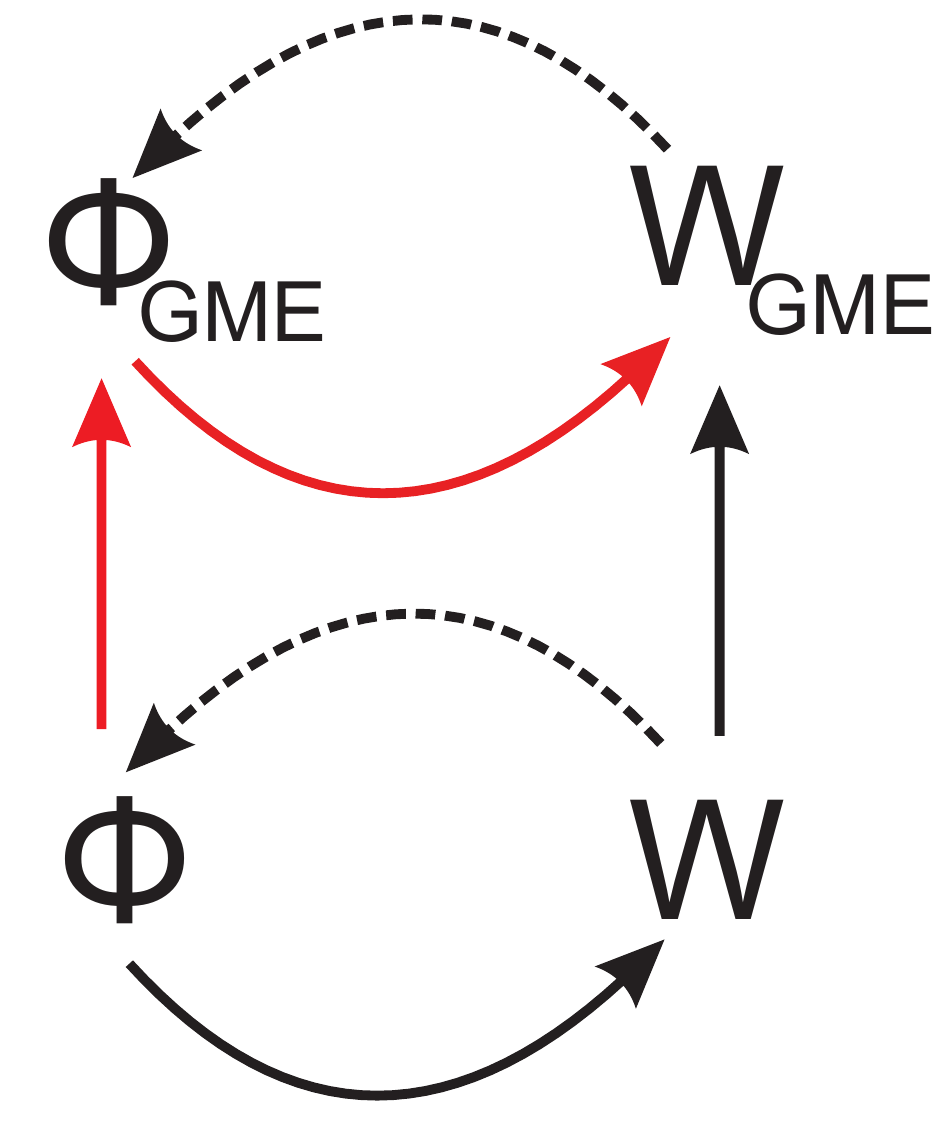}
\end{center}
\caption{An illustration of the connections between entanglement witnesses and non-positive maps. The bottom of the figure illustrates the bipartite situation, where the non-positive maps \(\Phi\) are obtained via tensoring a positive map with the identity. The top represents the multipartite situation, where the non-positive maps \(\Phi_{\text{GME}}\), so called GME-maps, can in some cases be constructed from positive maps via a lifting procedure. \(W\) and \(W_{\text{GME}}\) denote the witnesses of the respective cases. The black arrows denote previously known connections, the red arrows indicate our contribution with this paper.}
\label{diagram} 
\end{figure}

\section{GME criteria based on positive maps}\label{sec.results}

In this Section we present a method of developing GME criteria based on positive but not completely positive maps. Before we introduce our method, we first note that it is always possible to find  a GME-map, as defined in Eq.~\eqref{equ_GME-map}, detecting a given GME-state. Indeed let \(\rho_{\text{GME}}\) be a GME-state. 
 Then we know from \cite{relaxations} that there exists a GME-witness detecting \(\rho_{\text{GME}}\) of the form

\begin{equation}\label{eq.W}
W_{\text{GME}}= \sum_{A} \Lambda_A^* \otimes \OI_{\bar{A}} (\ket{\psi_A} \bra{\psi_A})+M_{\{\Lambda_A, \ket{\psi_A}\}_A},
\end{equation}
where for each partition A, \(\Lambda_A\) is a positive map, \(\ket{\psi_A}\) is chosen such 
that \(\bra{\psi_A} \Lambda_A \otimes \OI_{\bar{A}} (\rho_{\text{GME}} )\ket{\psi_A} < 0\), and \(M_{\{\Lambda_A, \ket{\psi_A}\}_A}\)
is a positive matrix depending on the choice of \(\Lambda_A\) and \(\ket{\psi_A}\). In Appendix \ref{app:W_GME}, it is shown how one obtains \(W_{\text{GME}}\) in \eqref{eq.W} from \cite{relaxations} and in particular how \(M_{\{\Lambda_A, \ket{\psi_A}\}_A}\) is constructed. Then the map
\begin{equation}\label{eq.PhifromW}
\Phi_{\text{GME}}[\rho]=\text{Tr}(W_{\text{GME}} \cdot \rho) \; \I
\end{equation}
 is a GME-map detecting \(\rho_{\text{GME}}\). We should also note that given any GME-map one can associate a GME-witness to each state detected by the map in the same way as done in the bipartite setting. Indeed given \(\Phi_{\text{GME}}\) and \(\rho_{\text{GME}}\) such that \(\Phi_{\text{GME}}(\rho_{\text{GME}}) \ngeq 0\), there exists a pure state \(\ket{\psi}\) such that \(\Phi_{\text{GME}}^* [\ket{\psi} \bra{\psi}]\) is a GME-witness detecting \(\rho_{\text{GME}}\). 
 
This looks all good but upon inspecting the GME-map defined by Eq.\eqref{eq.PhifromW}, one remarks that it is only gained from positive maps in an indirect way. 
 Indeed from the family of maps \(\{\Lambda_A \otimes \OI_{\bar{A}}\}_A\) (bottom left of Fig. \ref{diagram}), a family of bipartite 
 witnesses \(\{\Lambda_A^*\otimes \OI_{\bar{A}}(\ket{\psi_A} \bra{\psi_A})\}_A\) was first associated (bottom right of Fig. \ref{diagram}), 
 from which a GME-witness was extracted (top right of Fig. \ref{diagram}). This GME-witness finally defined the non-positive map \(\text{Tr}(W_{\text{GME}} \cdot \rho) \I\) of Eq.\eqref{eq.PhifromW} (top left of Fig. \ref{diagram}).
 
Now, our goal is to explore a direct lifting method, illustrated by the red arrow from \(\Phi\) to \(\Phi_{\text{GME}}\) in Fig. \ref{diagram}, from positive maps to GME maps, without passing through the witnesses. The reason for pursuing this is that GME-maps of the form of Eq.\eqref{eq.PhifromW} are quite trivial in the sense that the boundary of the set \(\{\rho \mid \text{Tr} (W_{\text{GME}} \cdot \rho) \; \I \geq 0\}\)
is a hyperplane rather than the boundary of a more complex convex set, which could possibly explore more subtleties
of genuine multipartite entanglement.

Inspiring ourselves from the structure of \(W_{\text{GME}}\) in Eq. \eqref{eq.W}, our goal is to seek for maps of the form
{
\begin{equation}
\label{eq:phidef}
\Phi_{\text{GME}}\coloneqq\sum_{A} \Lambda_A \otimes \OI_{\bar{A}} \circ \mathcal{U}^{(A)}  + M,
\end{equation}
where $M$ is a positive map, $\mathcal{U}^{(A)}[\rho]\coloneqq\sum_i p_i^{(A)} \, U_i^{(A)}\rho  \left( U_i^{(A)}\right)^\dagger$ is a family of convex combinations of local unitaries, and $\Phi_{\text{GME}}[\rho_{2-sep}]\geq 0$ $\forall \rho_{2-sep}$.}

 After those general considerations, we want to give in the remaining of the manuscript concrete examples of our method by exhibiting GME-maps constructed via the lift of positive but 
non-completely positive maps.

\section{Some direct lifting examples\label{sec.examples}}
\subsection{\label{sec:trans}Transposition-based GME criteria}

Our first example is possibly the most naive attempt, where we set
$M=c\,\I\cdot \text{Tr}$, \(\mathcal{U}^{(A)} =\OI\) and $\Lambda=T$, the transposition, in a tripartite setting, and find the value of $c$ for which \(\Phi_{\text{GME}}\) defined by Eq. \eqref{eq:phidef} is a GME-map. That is, 
we consider the map
\begin{align}\label{eqmapT}
\Phi_T[\cdot]=(T_A \otimes \OI_B \otimes \OI_C + \OI_A \otimes T_B \otimes \OI_C  + \OI_A \otimes \OI_B \otimes T_C  +c \,\I\cdot \text{Tr})[\cdot].
\end{align}

\begin{theorem}\label{thmphiT}
{
For $c=1$ it holds that for all tripartite biseparable states $\rho_{\text{2-sep}}$
\begin{align}
\Phi_T[{\rho}_{\text{2-sep}}] \geq 0 ,
\end{align}
and this value of $c$ is optimal, i.e. it is the least compatible with the above constraint.}
\end{theorem}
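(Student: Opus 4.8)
The plan is to exploit linearity of $\Phi_T$ together with the convex structure of the biseparable set. By Eq.~\eqref{eq.2sep}, every biseparable state is a convex combination of pure states that are product across one of the three bipartitions (just decompose each $\rho_A^i$ and $\rho_{\bar A}^i$ spectrally), so it suffices to verify $\Phi_T[\rho]\geq 0$ on states of the form $\rho=\ket{\phi_A}\bra{\phi_A}\otimes\sigma_{BC}$ with $\sigma_{BC}$ a pure state on $BC$, together with the two analogous families for the cuts $B|AC$ and $C|AB$. Since the map in Eq.~\eqref{eqmapT} is manifestly invariant under permutations of the three parties, it is enough to treat the single cut $A|BC$; the other two then follow by symmetry.

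Fix therefore $\rho=\ket{\phi_A}\bra{\phi_A}\otimes\sigma_{BC}$ and set $c=1$. Writing $\rho^{T_A}$, $\rho^{T_B}$, $\rho^{T_C}$ for the three partial transposes, I would first observe that $\rho^{T_A}=\ket{\phi_A^{*}}\bra{\phi_A^{*}}\otimes\sigma_{BC}\geq 0$, since transposing a rank-one projector yields another rank-one projector. The remaining two transposes act trivially on $A$, giving $\rho^{T_B}+\rho^{T_C}=\ket{\phi_A}\bra{\phi_A}\otimes(\sigma_{BC}^{T_B}+\sigma_{BC}^{T_C})$. Discarding the positive term $\rho^{T_A}$ and decomposing any test vector as $\ket{\chi}=\ket{\phi_A}\ket{\mu}+\ket{\chi_\perp}$, with $\ket{\chi_\perp}$ having vanishing $\ket{\phi_A}$-component on $A$, one gets $\bra{\chi}\Phi_T[\rho]\ket{\chi}\geq\bra{\mu}(\sigma_{BC}^{T_B}+\sigma_{BC}^{T_C}+\I_{BC})\ket{\mu}$ (using $\langle\chi|\chi\rangle\geq\langle\mu|\mu\rangle$). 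Hence the claim reduces to the purely bipartite operator inequality
\begin{align}\label{eq:key}
\sigma_{BC}^{T_B}+\sigma_{BC}^{T_C}+\I_{BC}\geq 0
\end{align}
for every pure state $\sigma_{BC}$.

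The core of the argument is to establish Eq.~\eqref{eq:key}, and this is where I expect the real work to lie. Using the Schmidt decomposition $\ket{\psi_{BC}}=\sum_k\lambda_k\ket{k}_B\ket{k}_C$ with $\lambda_k\geq 0$ and $\sum_k\lambda_k^2=1$, one checks that $\sigma_{BC}^{T_B}+\sigma_{BC}^{T_C}$ is block diagonal: it acts as $2\lambda_j^2$ on each vector $\ket{jj}$, and on each two-dimensional subspace $\mathrm{span}\{\ket{jk},\ket{kj}\}$ with $j\neq k$ it equals $2\lambda_j\lambda_k\de{\ketbra{jk}{kj}+\ketbra{kj}{jk}}$, whose eigenvalues are $\pm 2\lambda_j\lambda_k$. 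Thus the most negative eigenvalue of $\sigma_{BC}^{T_B}+\sigma_{BC}^{T_C}$ is $-2\max_{j\neq k}\lambda_j\lambda_k$, and the elementary bound $\lambda_j\lambda_k\leq\tfrac12(\lambda_j^2+\lambda_k^2)\leq\tfrac12$ yields exactly $-1$ as the sharp lower bound, proving Eq.~\eqref{eq:key} and hence the theorem for $c=1$.

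Finally, optimality of $c=1$ follows by saturating these estimates. Taking $B$ and $C$ to be qubits, $\sigma_{BC}=\ket{\Phi^+}\bra{\Phi^+}$ with $\ket{\Phi^+}=\tfrac{1}{\sqrt2}(\ket{00}+\ket{11})$, and the test vector $\ket{\chi}=\ket{\phi_A}\otimes\tfrac{1}{\sqrt2}(\ket{01}-\ket{10})$, the $\rho^{T_A}$ contribution vanishes (the singlet is orthogonal to $\ket{\Phi^+}$), the middle term contributes $-1$, and the identity term contributes $c$, so that $\bra{\chi}\Phi_T[\rho]\ket{\chi}=c-1$. For any $c<1$ this is strictly negative on a biseparable state, so $c=1$ is indeed the least admissible value.
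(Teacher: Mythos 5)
Your proof is correct, and while it rests on the same computational core as the paper's (Appendix \ref{min eig T}) --- the Schmidt decomposition of the pure state on the entangled side of the cut together with the bound $\lambda_j\lambda_k\leq\tfrac12(\lambda_j^2+\lambda_k^2)\leq\tfrac12$ --- it is organized around a different key lemma. The paper first proves the single-map statement $\mu(T)=\tfrac12$ (Lemma \ref{mu T}), i.e.\ $(T\otimes\OI)[\proj{\psi}]\geq-\tfrac12\,\I$ for every pure state, and then treats a general biseparable state via superadditivity of the minimal eigenvalue: for each product component, the transpose matched to the separability cut contributes $\geq 0$, each of the two unmatched transposes contributes $\geq-\tfrac12$, and the compensation $\I\cdot\Tr$ covers the total of $-1$. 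You instead reduce by convexity to pure states product across a cut, eliminate the matched transpose with a test-vector decomposition, and prove the joint operator inequality $\sigma_{BC}^{T_B}+\sigma_{BC}^{T_C}+\I_{BC}\geq 0$ by block-diagonalizing the sum: on each subspace $\mathrm{span}\{\ket{jk},\ket{kj}\}$ the two partial transposes restrict to the \emph{same} operator $\lambda_j\lambda_k\left(\ketbra{jk}{kj}+\ketbra{kj}{jk}\right)$, so their negative eigenvectors coincide and the minimal eigenvalue of the sum is exactly $-2\max_{j\neq k}\lambda_j\lambda_k$. This joint treatment is marginally sharper than adding two separate $-\tfrac12$ bounds --- it makes explicit that both negative contributions are realized by one and the same vector --- and it hands you the optimality example essentially for free; your witness (maximally entangled pair on $BC$, singlet test vector) is the same one the paper invokes, up to relabeling the parties. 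What the paper's more modular formulation buys in exchange is reusability: the quantity $\mu(\zeta)$ and the eigenvalue-superadditivity argument carry over verbatim to $n$ parties (Corollary \ref{corphiTn}) and to the reduction and Breuer--Hall maps, whereas your joint diagonalization exploits a coincidence specific to transposition and would not transfer as cleanly to those other positive maps.
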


The proof of Theorem \ref{thmphiT} consists in determining the minimal eigenvalue of the operators $T_A \otimes \OI_{\bar{A}}$, 
which turns out to be $-1/2$ for any partition \(A\) (see \cite{partialT}). The detailed proof of Theorem \ref{thmphiT} is presented in Appendix \ref{min eig T}.

The intuition behind this construction rests upon the prerequisite that the negative eigenstates under application of partial transposition for every cut have at least some nonzero overlap. As we will see in the following examples, some states will exhibit this property and the map works in a straightforward way. The additional unitary transformation in eq.(\ref{eq:phidef}) is supposed to systematically map the corresponding negative eigenstates into the same space, opening up the possibility to detect many more states. We will see that such unitary corrections can be readily constructed, but their success depends also on the symmetries exhibited by the state.

To start, let us consider an example where the map directly works. If we consider the state 
\begin{align}
\ket{W}&=\frac{1}{\sqrt{3}}(\ket{001}+\ket{010}+\ket{100}),
\end{align}
we find $\Phi_T[\ketbra{W}{W}]\ngeq 0$ with negative eigenvalue \(1-\frac{2}{\sqrt{3}} \approx -0,15\) and therefore $\Phi_T$ provides a GME criterion. The map  $\Phi_T$ can detect the noisy $W$ state,  \(p  \ket{W}\bra{W} + (1-p)  \frac{\mathbb{I}}{8}\), for  \(p >\frac{11 \sqrt{3}}{16+ 3 \sqrt{3} } \approx 0,90\). Note that the noise resistance of this detection criteria is significantly lower than the actual noise threshold, i.e. the minimum $p$ necessary for the noisy \(W\) state to be GME is \(p=0.4790\) \cite{guhnetaming}. 
{An intuitive reason for the inefficiency of map $\Phi_T$ is that  no attempt is made to map the negative 
eigenstates of the partial transposition with respect to different bipartitions into the same subspace. As we are going to see in  Secs. \ref{sec:diagM} and \ref{sec:choi}, such  procedures provide robust criteria for noisy GHZ states. } \\

The map $\Phi_T$ can be modified to detect the $\ket{\text{GHZ}}= \frac{1}{\sqrt{2}} (\ket{000} + \ket{111})$ state, by choosing $\mathcal{U}^{(A)}=\tilde{\sigma}_x^A$, where 
$\tilde{\sigma}_x^A[\rho]=\prod_{i\in A}{\sigma_x}_i\otimes \I_{\bar{i}}\rho\prod_{i\in A}{\sigma_x}_i\otimes \I_{\bar{i}}$. Let us consider  
\begin{align}\label{eqmapTx}
\Phi_{Tx}[\cdot]=(\tilde{\sigma}_x\circ T_A \otimes \OI_B \otimes \OI_C + \OI_A \otimes \tilde{\sigma}_x\circ T_B \otimes \OI_C  + \OI_A \otimes \OI_B \otimes \tilde{\sigma}_x \circ T_C  +\OI\cdot \text{Tr})[\cdot]
\end{align}
where the positive map $\tilde{\sigma}_x\circ T$ denotes the transposition followed by the application of the unitary Pauli operator $\sigma_x$.
Since we only added a local unitary operation, the map $\Phi_{Tx}$ remains positive on all biseparable states. One calculates $\Phi_{Tx}[\ketbra{\text{GHZ}}{\text{GHZ}}]\ngeq 0$ with negative eigenvalue \(- \frac{1}{2}\) . Therefore, \(\Phi_{Tx}\) is  also a GME-map. 
It furthermore detects the noisy GHZ state, \(p \ket{\text{GHZ}}+ (1-p) \frac{\mathbb{I}}{8}\), for all \(p>\frac{11}{15} \approx 0.73\). 
Though this detection is more robust than the one of the noisy \(W\) by \(\Phi_T\), it is, despite the non-trivial choice of local unitares, 
still far from detecting all the noisy GHZ states that are GME; the condition for the latter being \(p \geq \frac{3}{7}\)\cite{guhnesee}. Section \ref{sec:diagM} is devoted to optimizing this detection criteria by making a more suitable choice of local unitaries.

{Also note that the reason why the map $\Phi_{Tx}$ successfully detects the  $\text{GHZ}$ state, as opposed to $\Phi_T$, is that by 
composing   the transposition with the unitary operation $\sigma_x$ we are exploring the symmetries of the state: note that 
the off-diagonal elements of the $\text{GHZ}$ state remain invariant under the application of partial transposition followed 
by the Pauli matrix $\sigma_x$.}
{Further considerations of symmetry will be made in section \ref{sec:choi}. }

The map $\Phi_{Tx}$ can be generalised for an arbitrary number of parties. Let $A \subset \DE{1,\ldots, n}$ be a proper subset of the parties, we define the map
\begin{align}\label{eqmapTn}
\Phi_{Tx,n}[\cdot]=\de{\sum_{A} \tilde{\sigma}_x^A \circ T_A \otimes \OI_{\bar{A}} +(2^{n-1}-2)\frac{1}{2}\I\cdot \text{Tr}}[\cdot],
\end{align}

\begin{theorem}\label{thmphiTn}
For every $n$-partite biseparable state  $\rho_{2-sep}$ it holds that
\begin{align}
\Phi_{Tx,n}[{\rho}_{2-sep}] \geq 0.
\end{align}
\end{theorem}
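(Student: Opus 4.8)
The plan is to exploit the linearity of $\Phi_{Tx,n}$ together with the convex structure of the biseparable set in Eq.~\eqref{eq.2sep}. Since $\Phi_{Tx,n}$ is linear, every biseparable $\rho_{2-sep}$ is a convex combination of terms $\rho^i_A\otimes\rho^i_{\bar A}$, and the cone of positive operators is convex, it suffices to prove $\Phi_{Tx,n}[\sigma]\geq 0$ for a single normalised product state $\sigma=\sigma_{A_0}\otimes\sigma_{\bar A_0}$ separable across one fixed bipartition $A_0\vert\bar A_0$. Positivity on all biseparable states then follows by taking the corresponding convex combination.

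I would then split the defining sum of $\Phi_{Tx,n}$ into the single term whose cut matches $A_0\vert\bar A_0$ and the remaining non-matching terms. For the matching term, $T_{A_0}\otimes\OI_{\bar A_0}$ sends $\sigma$ to $\sigma_{A_0}^{\,T}\otimes\sigma_{\bar A_0}\geq 0$, since transposition preserves positivity of $\sigma_{A_0}$; composing with the local unitary channel $\tilde{\sigma}_x^{A_0}$ is a conjugation and leaves this positive, so this term is $\geq 0$. For every non-matching term I would invoke the eigenvalue bound already used in the proof of Theorem~\ref{thmphiT}: for any normalised state the partial transpose $T_A\otimes\OI_{\bar A}$ has smallest eigenvalue at least $-1/2$ (see \cite{partialT}), and since $\tilde{\sigma}_x^{A}$ is unitary it does not alter the spectrum. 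Hence each non-matching term is bounded below by $-\tfrac12\,\I$.

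The final step is a counting argument that pins down the constant multiplying the trace term. The sum $\sum_A$ ranges over all bipartitions $A\vert\bar A$, of which there are $2^{n-1}-1$; discarding the one matching cut leaves exactly $2^{n-1}-2$ non-matching terms. Their sum is therefore bounded below by $-(2^{n-1}-2)\tfrac12\,\I$, which is precisely cancelled by $(2^{n-1}-2)\tfrac12\,\text{Tr}(\sigma)\,\I=(2^{n-1}-2)\tfrac12\,\I$. Adding back the non-negative matching term gives $\Phi_{Tx,n}[\sigma]\geq 0$, completing the argument.

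Since the analytic input, namely the universal lower bound $-1/2$ on the spectrum of a normalised partial transpose, is already supplied by Theorem~\ref{thmphiT} and \cite{partialT}, the only genuinely new ingredient is the bookkeeping. The main point deserving care is that the local unitaries $\tilde{\sigma}_x^{A}$, which are indispensable for detecting GHZ-type states but irrelevant to the positivity estimate, must be seen to leave every spectrum invariant, and that transposing $A$ rather than $\bar A$ is immaterial here because $\rho^{T_A}$ and $\rho^{T_{\bar A}}$ share the same spectrum and both preserve positivity across the cut. Thus the choice of representative for each bipartition does not affect the bound, and it is the count $2^{n-1}-2$ of non-matching cuts that makes the constant $(2^{n-1}-2)/2$ exactly sufficient.
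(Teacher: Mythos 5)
Your proof is correct and follows essentially the same route as the paper's: the paper also reduces to the spectral bound $\mathrm{EV}_{\min}\de{T_A\otimes \OI_{\bar A}[\rho]}\geq -\tfrac12$ from \cite{partialT}, splits the sum over the $2^{n-1}-1$ bipartitions into the one matching cut (which gives a positive operator) and the $2^{n-1}-2$ non-matching cuts (each bounded below by $-\tfrac12\,\I$), and disposes of the local unitaries $\tilde{\sigma}_x^A$ by noting that conjugation by a unitary leaves the spectrum invariant. Your convexity-first reduction to a single product state is just a reorganisation of the paper's use of the superadditivity of the minimal eigenvalue over the convex decomposition, so the two arguments are the same in substance.
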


The proof of Theorem \ref{thmphiTn} follows in the same line as the proof of Theorem \ref{thmphiT}. 
Moreover, $\Phi_{Tx,n}[\ketbra{\text{GHZ}_n}{\text{GHZ}_n}]\ngeq 0$ for all $n$, with negative eigenvalue \(- \frac{1}{2}\) which implies that \(p \ket{\text{GHZ}} + (1-p) \frac{\mathbb{I}}{2^n}\) is detected for all \( p >\frac{2^{2n-2} -2^n+ 2^{n-1}-1}{2^{2n-1}-1} \stackrel{n \rightarrow \infty}{\rightarrow} 1,\) 
where $\ket{\text{GHZ}_n}=\frac{1}{\sqrt{2}} (\ket{00 \dots 0} + \ket{11 \dots 1})$ is the $n$-partite GHZ state. Though this naive generalization has the great advantage of detecting \(\ket{\text{GHZ}_n}\) 
for any $n$, its noise resistance scales badly with $n$ and this criterion fails to detect a large portion of the $n$-partite GME noisy GHZ states, 
the latter being GME for all \(p> \frac{2^{n-1}-1}{2^n-1}\) \cite{guhnesee}. Section \ref{sec:diagM} will correct for this by modifying this GME 
map in order to detect all $n$-partite (and in fact qudit) GME noisy GHZ states.
Finally note that the map \(\Phi_T\) can also be similarly generalised for $n$ parties, however, already for \(n=4\), the $n$-partite $W$ state is not detected anymore. 
We therefore see that this construction although generating GME-maps, does not provide a noise resistant detection. 
This might have two causes, namely the choice of the positive map 
or the naive choice of $M$. See sections \ref{sec:diagM} or \ref{sec:choi} for less naive constructions.

\subsection{\label{sec:diagM}Optimized transposition criteria}

In this section we will modify the map \(\Phi_{Tx,n}\) of section \ref{sec:trans} by looking at a slightly better choice for the correction map \(M\). This will enable us to detect the GHZ state with an optimal noise resistance. We will choose \(M=(2^{n-1}-2) \text{Diag}\circ \phi\), where here \(\phi= \sum_A \tilde{\sigma}_x^A \circ T_A \otimes \OI_{\bar{A}}\) {and $\text{Diag}[\rho]$ maps $\rho$ to a diagonal matrix with the same elements as $\rho$}.  Hence we look at the map
\begin{equation}
\eta \coloneqq \phi + (2^{n-1}-2) \text{Diag} \circ \phi.
\end{equation}

We also want to project our $n$-qubit state onto a subspace before applying \(\eta\) to it. The subspace we want to project onto is 
\begin{equation}
\text{span} \{\ket{i}\bra{j} \mid \bra{i} \text{GHZ}_{n,\text{cyclic}} \ket{j} \neq 0\},
\end{equation}
 where for \(X^C\coloneqq \mathbb{I} \otimes X^{C_2} \otimes \dots \otimes X^{C_n}\), with \(X\coloneqq \begin{pmatrix} 0&1\\ 1&0 \end{pmatrix}\) and \(C\coloneqq (C_2, \dots, C_n) \in \{0,1\}^{n-1}\), we have 
 \begin{equation}\label{Xstates}
 \text{GHZ}_{n,\text{cyclic}}\coloneqq \sum_{C \in \{0,1\}^{n-1}}X^C \ket{\text{GHZ}_n} \bra{\text{GHZ}_n}(X^C)^{\dagger}.
 \end{equation}
  {The class of states invariant under the projection onto subspace \eqref{Xstates} is also known as $X$-states \cite{Xstates}, and genuine multipartite entanglement in this class of states was fully characterised in Ref.~\cite{gmeXstates}.} 
The notation is more extensively explained in section \ref{sec:choi}. 
{The projector defined in Eq. \eqref{Xstates} can be seen as a projection onto the subspace of GHZ$+$ states, as defined in Ref.~\cite[Eq. (5)]{DC00}. In \cite{DC00} the authors have shown that by local depolarization 
 any $n$-qubit state can be taken to a state diagonal in the GHZ basis.}
As we are going to see in Proposition \ref{prop-main} in section \ref{sec:choi}, it turns out that one can project onto this subspace via a mixture of local unitaries (the assertion holds for any \(d \geq 2\) but for now we are interested in the case \(d=2\)), thus ensuring that no entanglement is created in the process. We denote this projection by \(\mathcal{X}_n\). Our map of interest is thus \(\eta \circ \mathcal{X}_n\). 

\begin{theorem}\label{thm_eta}
\(\eta \circ \mathcal{X}_n [\rho_{\text{2-sep}}] \geq 0\), for all n-qubit biseparable states \(\rho_{\text{2-sep}}\).
\end{theorem}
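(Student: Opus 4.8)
The plan is to strip the problem down to a single bipartite cut and then exploit the fact that, on the $X$-state subspace, the map $\eta$ becomes block diagonal, so that positivity reduces to a family of $2\times 2$ minors controlled by a partial-transpose constraint.

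First I would perform the reductions. By Proposition~\ref{prop-main} the projection $\mathcal{X}_n$ is a convex mixture of local unitaries, so $\tau\coloneqq\mathcal{X}_n[\rho_{\text{2-sep}}]$ is again biseparable and, by construction, is supported on the $X$-state subspace: its only nonvanishing matrix elements are the diagonal ones $\tau_{b,b}$ and the anti-diagonal ones $\tau_{b,\overline{b}}$ connecting a string $b\in\{0,1\}^n$ to its bit-complement $\overline{b}$. Since $\eta\circ\mathcal{X}_n$ is linear and, by \eqref{eq.2sep}, every biseparable state is a convex combination of states that are separable across a single fixed cut, it suffices to prove $\eta[\tau]\geq 0$ when $\tau=\mathcal{X}_n[\sigma]$ with $\sigma$ separable across one bipartition $A'|\bar{A'}$. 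Because each local unitary preserves the product structure across $A'$, such a $\tau$ is still separable, hence PPT, across $A'|\bar{A'}$; this is the only extra property of $\tau$ I will use.

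Next I would compute the action of $\eta$ on an $X$-state. Writing $e_A\in\{0,1\}^n$ for the indicator string of $A$ and $\oplus$ for bitwise addition, a direct check shows that each summand $\tilde{\sigma}_x^A\circ T_A\otimes\OI_{\bar A}$ sends $X$-states to $X$-states: it relabels the diagonal by $b\mapsto b\oplus e_A$ and, crucially, maps \emph{every} anti-diagonal entry back to $\tau_{b,\overline{b}}$, independently of $A$. Summing over the $2^{n-1}-1$ bipartitions, $\phi[\tau]$ is itself an $X$-state with
\begin{equation}
(\phi[\tau])_{b,b}=d_b\coloneqq\sum_{A}\tau_{b\oplus e_A,\,b\oplus e_A},\qquad (\phi[\tau])_{b,\overline{b}}=(2^{n-1}-1)\,\tau_{b,\overline{b}}.
\end{equation}
This coherent alignment of the off-diagonal contributions is exactly what the $\sigma_x$ correction buys us. Adding $(2^{n-1}-2)\,\text{Diag}\circ\phi$ only rescales the diagonal, so $\eta[\tau]$ splits into $2\times 2$ blocks
\begin{equation}
(2^{n-1}-1)\begin{pmatrix} d_b & \tau_{b,\overline{b}}\\ \tau_{\overline{b}, b} & d_{\overline{b}}\end{pmatrix},
\end{equation}
one per complementary pair $\{b,\overline{b}\}$. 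As $d_b,d_{\overline{b}}\geq 0$ automatically (they are sums of diagonal entries of the state $\tau$), positivity of $\eta[\tau]$ is equivalent to the scalar inequalities $d_b\,d_{\overline{b}}\geq|\tau_{b,\overline{b}}|^2$.

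Finally I would close the argument using PPT across $A'$. The partial transpose $T_{A'}$ leaves the diagonal of $\tau$ fixed and relabels its anti-diagonal by $b\mapsto b\oplus e_{A'}$, so $T_{A'}\otimes\OI[\tau]\geq 0$ forces each of its $2\times 2$ blocks to be positive, in particular
\begin{equation}
\tau_{b\oplus e_{A'},\,b\oplus e_{A'}}\;\tau_{\overline{b}\oplus e_{A'},\,\overline{b}\oplus e_{A'}}\;\geq\;|\tau_{b,\overline{b}}|^2 .
\end{equation}
The two diagonal entries on the left are precisely the summands of $d_b$ and of $d_{\overline{b}}$ indexed by the separating bipartition $A'|\bar{A'}$ (using $\overline{b}\oplus e_{A'}=\overline{b\oplus e_{A'}}$; which factor lands in which sum depends on the representative chosen for $\{A',\bar{A'}\}$, but the product is symmetric, so this is immaterial). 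Since all summands of $d_b$ and $d_{\overline{b}}$ are nonnegative, dropping the remaining terms yields $d_b\,d_{\overline{b}}\geq|\tau_{b,\overline{b}}|^2$ for every pair $\{b,\overline{b}\}$, hence $\eta[\tau]\geq 0$. The main obstacle is the middle step: one must verify that $\tilde{\sigma}_x^A\circ T_A$ funnels all anti-diagonal contributions onto the single coherent value $\tau_{b,\overline{b}}$, and then match, block by block, the relevant $2\times 2$ PPT minor of $\tau$ to terms that genuinely sit inside $d_b$ and $d_{\overline{b}}$; the rest is bookkeeping with bit-strings together with the elementary positivity criterion for $2\times 2$ Hermitian matrices.
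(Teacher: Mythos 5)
Your proof is correct, and its backbone coincides with the paper's: both arguments rest on (i) the fact that $\mathcal{X}_n$ is a mixture of local unitaries, so it preserves separability (hence PPT) across the relevant cut and projects onto the $X$-state subspace, (ii) the invariance of the anti-diagonal entries of $X$-states under $\tilde{\sigma}_x^A\circ T_A\otimes \OI_{\bar A}$ --- this is exactly the paper's Lemma~\ref{lemmaODT}, which you re-derive by direct bit-string computation rather than cite, and (iii) positivity under partial transposition across the separating cut $A'|\bar{A'}$. Where you genuinely differ is the final step. The paper never block-diagonalizes: it regroups $\eta[\tilde\rho]$ at the operator level as $(2^{n-1}-1)\,\tilde{\sigma}_x^{A'}\circ T_{A'}\otimes\OI_{\bar{A'}}[\tilde\rho]$, which is positive semidefinite because $\tilde\rho$ is separable across $A'|\bar{A'}$, plus $(2^{n-1}-1)\sum_{B\neq A'}\text{Diag}\bigl[\tilde{\sigma}_x^B\circ T_B\otimes\OI_{\bar B}[\tilde\rho]\bigr]$, a nonnegative diagonal operator. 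You instead exploit the $2\times 2$ block structure of $X$-matrices and verify the determinant minors $d_b\, d_{\bar b}\geq|\tau_{b,\bar b}|^2$, keeping only the single summand of each of $d_b$ and $d_{\bar b}$ that comes from the separating cut and matching it against the corresponding $2\times 2$ PPT minor of $T_{A'}[\tau]$. Your version is more self-contained and makes the mechanism completely explicit, including the correct observation that the choice of representative $A'$ versus $\bar{A'}$ only permutes which factor lands in which sum, which is immaterial for the product. The paper's operator-level regrouping buys two things in exchange: it avoids all matrix-element bookkeeping, and it is the template that carries over essentially verbatim to the qudit Choi-map criterion (Theorem~\ref{thm_choigme}), where the projected states no longer split into $2\times 2$ blocks and a minor-based analysis of your kind would not go through as stated.
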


\begin{proof}
The proof relies on the following fact:

\begin{lemma}\label{lemmaODT}
\[\text{OD}\circ \phi [\mathcal{X}_n [\rho]] = (2^{n-1}-1) \text{OD}[\tilde{\sigma}_x^A \circ T_A \otimes \OI_{\bar{A}} [\mathcal{X}_n[\rho]]], \quad \forall \rho, \forall A,\]
where \(\text{OD}[X]\coloneqq X- \text{Diag} [X]\) {and \(\phi= \sum_A \tilde{\sigma}_x^A \circ T_A \otimes \OI_{\bar{A}}\)}.
\end{lemma}


The intuition behind this fact is that for $X$-states, the off-diagonal elements are equally permuted by partial transposition in subsystem $A$ and $\sigma_x$ flips in the same subsystem, i.e. all their off-diagonal elements are invariant under application of \(\tilde{\sigma}_x^A \circ T_A \otimes \OI_{\bar{A}}\).

Now let us consider an $n$-partite state biseparable with respect to partition $A|\bar{A}$: $\rho=\rho_A \otimes \rho_{\bar{A}}$. By projecting $\rho$ 
onto the above subspace the resultant state $\tilde{\rho}\coloneqq\mathcal{X}_{n}(\rho)$ is still biseparable with respect to partition $A|\bar{A}$ since the projection $\mathcal{X}_{n}$ is a separable operation.  Analysing the map
$\eta$ applied to $\tilde{\rho}$ gives:
\begin{align}
 \eta[\tilde{\rho}]=& \text{Diag}[\phi[\tilde{\rho}]]+\text{OD}[\phi[\tilde{\rho}]] +(2^{n-1}-2) \text{Diag}[\phi[\tilde{\rho}]]\nonumber\\
 =&\text{Diag}[\tilde{\sigma}_x^A  \circ T_A \otimes \OI_{\bar{A}}[\tilde{\rho}]]+\sum_{B \neq A} \text{Diag}[\tilde{\sigma}_x^B  \circ T_B \otimes \OI_{\bar{B}}[\tilde{\rho}]]\nonumber\\
& +(2^{n-1}-1) \text{OD}[\tilde{\sigma}_x^A  \circ T_A \otimes \OI_{\bar{A}}[\tilde{\rho}]]\nonumber\\
&+(2^{n-1}-2) \text{Diag}[\tilde{\sigma}_x^A  \circ T_A \otimes \OI_{\bar{A}} [\tilde{\rho}]]\\
&+(2^{n-1}-2) \sum_{B \neq A}\text{Diag}[\tilde{\sigma}_x^B  \circ T_B \otimes \OI_{\bar{B}} [\tilde{\rho}]]\nonumber\\
=&\overbrace{\text{Diag}[\tilde{\sigma}_x^A  \circ T_A \otimes \OI_{\bar{A}}[\tilde{\rho}]]+(2^{n-1}-2) \text{Diag}[\tilde{\sigma}_x^A  \circ T_A \otimes \OI_{\bar{A}}[\tilde{\rho}]]+(2^{n-1}-1)\text{OD}[\tilde{\sigma}_x^A  \circ T_A\otimes \OI_{\bar{A}}[\tilde{\rho}]]}^{ = (2^{n-1}-1) \tilde{\sigma}_x^A  \circ T_A \otimes \OI_{\bar{A}}[\tilde{\rho}] \geq 0}\nonumber\\
 &+ (2^{n-1}-1)\sum_{B\neq A}\underbrace{\text{Diag}[\tilde{\sigma}_x^B  \circ T_B \otimes I_{\bar{B}}[\tilde{\rho}]]}_{\geq 0} \nonumber\\
 \geq& 0,\nonumber
 \end{align}
 where in the second step we have used Lemma \ref{lemmaODT}.

The proof of Theorem \ref{thm_eta} for an arbitrary biseparable state $\rho$ follows from the linearity of the map $\eta \circ \mathcal{X}_n$.
\end{proof}

The map \(\eta \circ \mathcal{X}_n\) furthermore detects the $n$-qubit GHZ state but now with a improved noise resistance. Indeed for all \(n \geq 2\),
\begin{equation}
\eta \circ \mathcal{X}_n [ p \; \text{GHZ}_n + (1-p) \; \frac{\I}{2^n} ] \ngeq 0 \qquad \forall\ 1 \geq p > \frac{2^{n-1}-1}{2^n-1}.
\end{equation} 

For \(n=3\) for example this means \(1 \geq p > \frac{3}{7}\), meaning \(\eta \circ \mathcal{X}_3\) optimally detects the 3-qubit GHZ state. In fact, it reproduces the necessary and sufficient conditions first presented in Ref.~\cite{guhnesee} for all GHZ-diagonal states and any number of qubits $n$. This has the added benefit that the map is simple to apply and invariant under many local unitary operations, detecting a multitude of states using the same criterion.

\subsection{Reduction-based GME criterion}

Similarly to the transposition based criterion we can construct a GME criterion for tripartite states based on the reduction map. The reduction map acting on
$\rho \in \mathcal{P}(\mathbb{C}^d)$ is defined as
\begin{align}
 R(\rho)\coloneqq \frac{1}{d-1}\de{\text{Tr}(\rho)\I-\rho}.
\end{align}

The corresponding GME map is of the form
\begin{align}
\Phi_{R}[\cdot]=(R_{A}\otimes I_{BC}+ I_{A}\otimes R_{B}\otimes I_C + I_{AB}\otimes R_{C} +c\I \cdot\text{Tr})[\cdot].
\end{align}
We are now ready to prove the following result.

\begin{theorem}\label{thmreduction}
For $c=\frac{2}{d}$ it holds that for all tripartite biseparable state
$\rho_{\text{2-sep}}\in \mathcal{P}(\mathbb{C}^d \otimes \mathbb{C}^d\otimes \mathbb{C}^d)$
\begin{align}
\Phi_{R}[{\rho}_{\text{2-sep}}] \geq 0 .
\end{align}
\end{theorem}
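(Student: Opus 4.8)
The plan is to mirror the proof of Theorem~\ref{thmphiT}: reduce positivity on all biseparable states to the computation of a single minimal eigenvalue, and then evaluate it. The first step is to exploit linearity and convexity. Because $\Phi_R$ is linear and the positive semidefinite cone is convex, and because every biseparable state is a convex combination of states that are pure products across a single bipartition, it suffices to check $\Phi_R[\,|\phi_A\rangle\langle\phi_A|\otimes|\phi_{\bar A}\rangle\langle\phi_{\bar A}|\,]\geq 0$ for each bipartition and each such pure product state, where crucially the factor $|\phi_{\bar A}\rangle$ on the two-party complement may itself be entangled. Since $\Phi_R$ is invariant under permutations of the three parties (the reduction terms permute among themselves and $c\,\I\cdot\Tr$ is symmetric), I only need to treat the cut $A|BC$.

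Second, fixing $\rho=|\phi_A\rangle\langle\phi_A|\otimes\rho_{BC}$ with $\rho_{BC}=|\phi_{BC}\rangle\langle\phi_{BC}|$, I would decompose the $A$ factor as $\mathrm{span}\{|\phi_A\rangle\}\oplus|\phi_A\rangle^{\perp}$. Using that $R_A$ acts on a pure local state, $\Phi_R[\rho]$ becomes block diagonal in this splitting: on the orthogonal block it equals $\I_{d-1}\otimes\big(\tfrac{1}{d-1}\rho_{BC}+c\,\I_{BC}\big)\geq 0$, while the only nontrivial block, supported on $|\phi_A\rangle$, is
\[ S + c\,\I_{BC}, \qquad S:=(R_B\otimes I_C)[\rho_{BC}]+(I_B\otimes R_C)[\rho_{BC}]. \]
The theorem therefore reduces to showing that the smallest admissible $c$ equals $-\min_{\rho_{BC}}\lambda_{\min}(S)$, and that this quantity is $2/d$.

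Third, I would compute $\lambda_{\min}(S)$ via the Schmidt decomposition $|\phi_{BC}\rangle=\sum_k\sqrt{\mu_k}\,|kk\rangle$. A short calculation gives $S=\tfrac{1}{d-1}\big(\I_B\otimes\rho_C+\rho_B\otimes\I_C-2\rho_{BC}\big)$, which in the Schmidt basis is diagonal and nonnegative on $\mathrm{span}\{|jk\rangle:j\neq k\}$ and reduces on the coherent subspace $\mathrm{span}\{|kk\rangle\}$ to $\tfrac{2}{d-1}\big(D_\mu-|v\rangle\langle v|\big)$, with $D_\mu=\mathrm{diag}(\mu_k)$ and $|v\rangle=\sum_k\sqrt{\mu_k}\,|k\rangle$. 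All negative eigenvalues thus arise from this rank-one perturbation of a diagonal matrix, and they solve the secular equation $\sum_k \mu_k/(\mu_k-\lambda)=1$.

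The main obstacle, and the crux of the argument, is to optimise the resulting bound over all Schmidt vectors. Setting $t=-\lambda>0$, the secular equation becomes $\sum_k t/(\mu_k+t)=r-1$, with $r$ the Schmidt rank. For fixed $t$ the left side is a sum of convex functions of the $\mu_k$, hence Schur-convex and minimised at the uniform distribution; being also increasing in $t$, the uniform (maximally entangled) choice $\mu_k=1/r$ yields the largest root $t=(r-1)/r$, which is in turn largest at full rank $r=d$. Hence $\min_{\rho_{BC}}\lambda_{\min}(S)=\tfrac{2}{d-1}\cdot\big(-\tfrac{d-1}{d}\big)=-\tfrac{2}{d}$, so $c=2/d$ suffices and is optimal, with the maximally entangled state saturating the bound. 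Making this majorisation step rigorous is where the real work lies.
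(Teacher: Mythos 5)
Your proof is correct, but it takes a genuinely different route from the paper's. The paper proves a stand-alone lemma that the reduction map has minimal output eigenvalue $\mu(R)=\frac{1}{d}$ --- the lower bound by evaluating $(R\otimes \OI)$ on the maximally entangled state, the upper bound by identifying $R\otimes \OI + \frac{1}{d}\,\I\Tr$ with a bipartite depolarizing map whose positivity is imported from Theorem 3 of \cite{LH} --- and then reuses verbatim the superadditivity scaffold of Theorem \ref{thmphiT}: in each term of the biseparable decomposition the reduction acting on the matching cut gives a positive operator, each of the two mismatched reduction terms has minimal eigenvalue bounded below by $-\frac{1}{d}$, and the compensation $c=\frac{2}{d}$ (two mismatched terms for three parties) restores positivity. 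You instead reduce to pure biseparable states, exploit the fact that $R$ applied to a pure state is $\frac{1}{d-1}$ times the projector onto its orthocomplement to block-diagonalize $\Phi_R[\rho]$, and then diagonalize the \emph{sum} $(R_B\otimes I_C + I_B\otimes R_C)[\rho_{BC}]$ exactly via the Schmidt decomposition, a rank-one secular equation, and a Schur-convexity argument. Your route buys three things: it is self-contained (no appeal to \cite{LH}); it treats the two mismatched terms jointly rather than term by term, showing that the joint worst case $-\frac{2}{d}$ actually coincides with the sum of the individual worst cases (both attained at the maximally entangled state on $BC$, so the paper's superadditivity bound is tight); and it establishes optimality of $c=\frac{2}{d}$ as a byproduct, which the theorem does not even claim. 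The paper's route buys modularity: the same lemma-plus-superadditivity template applies unchanged to the transposition and Breuer--Hall maps and extends immediately to $n$ parties with coefficient $(2^{n-1}-2)\,\mu$, whereas your block decomposition is tied to the tripartite structure, in which every bipartition has a single party on one side --- for a cut $A|\bar{A}$ with $|A|\geq 2$ the reduction terms acting on parties inside $A$ hit an entangled pure state and are no longer projectors, so in that setting one would be driven back to the single-map bound $\mu(R)=\frac{1}{d}$.
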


The proof of Theorem \ref{thmreduction} follows in the same way as the transposition based criteria, where the crucial step consists in determining 
the minimum eigenvalue of the reduction map. The details are presented in Appendix \ref{min eig R}.

Now we want to evaluate the action of the map $\Phi_{R}$ on GHZ state of dimension $d$, i.e. $\ket{GHZ(d)}=\frac{1}{\sqrt{d}}\sum_{i=0}^{d-1}\ket{iii}$. In what follows, let us denote by $GHZ(d)$ the projector onto $\ket{GHZ(d)}$. A straightforward calculation shows that
\begin{align}
\Phi_{R}[GHZ(d)]\, =\, \frac2d\, \I + \frac{1}{d(d-1)}\de{ E_{AB}\otimes \I_{C} + E_{AC}\otimes \I_{B} + \I_{A}\otimes E_{BC} } - \frac{3}{d-1}\, GHZ(d)\, ,
\end{align}
where we defined
\begin{align}
E_{AB} \coloneqq \sum_{i=1}^{d} \ketbra{i}{i}_{A}\otimes\ketbra{i}{i}_{B}
\end{align}
and analogously for $E_{AC}, E_{BC}$. The above operator $\Phi_{R}[GHZ(d)]$ can be easily diagonalised by observing that all the addends composing its expression commute with each other. The minimal eigenvalue corresponds to the eigenvector $\ket{GHZ(d)}$ and evaluates to $-1/d$. 

The noise resistance can also be easily obtained 
\begin{align}
\Phi_{R}\De{p\, GHZ(d) + (1-p)\frac{\I}{d^3}}\ngeq 0 \qquad \forall \ p>1-\frac{d^2}{3(d^2+1)}\, .
\end{align}
Note that the above lower bound on the GME threshold tends to $2/3$ as $d\rightarrow\infty$, while it evaluates to $11/15$ for $d=2$. This latter value corresponds to the noise resistance we obtained for the modified lifted partial transposition \eqref{eqmapTx} on the $3$-qubit GHZ state. This is no coincidence, but a consequence of the fact that reduction map and the partial transposition are unitarily equivalent for a two-level system. Of course, one can generalize this criterion for \(n>3\) as has been done in the partial transpose case in section \ref{sec:trans}, not forgetting to take care of the fact that the minimal eigenvalue depends on the dimension of the space the map is applied to. However, already for \(n=3\) the noise resistance detection of the map on the GHZ state is not optimal, which shows that at least
for this state this is a less promising approach than transposition (in the qubit case) or choi map (in the qudit case), see section \ref{sec:choi}, which is why we now continue to explore further maps. 


%
%
%

\subsection{Breuer-Hall map-based GME criterion}

In this section we consider an indecomposable map: the Breuer-Hall map introduced in \cite{Breuer,Hall}. Non-decomposable maps 
are the ones which cannot be written as the sum of a completely positive map and the composition of a completely positive map with transposition, therefore in the bipartite case these are the maps which can detect PPT bound entanglement.

The Breuer-Hall map is defined for even dimensional systems with $d\geq4$ as:
\begin{align}
\mathcal{B}\, =\, \frac{\I\Tr - I - \mathcal{V}T}{d-2}\, ,
\end{align}
where $\mathcal{V}$ represents the application of a skew-symmetric unitary\footnote{An skew-symmetric unitary matrix is a unitary matrix such that $V^T=-V$.} (note that skew-symmetric unitary matrices only exist in even dimension).

Defining the tripartite GME map 
\begin{align}\label{eq:BH3}
\Phi_{\mathcal{B}}[\cdot]=(\mathcal{B}_{A}\otimes I_{BC}+ I_A \otimes \mathcal{B}_{B}\otimes I_C+I_{AB}\otimes \mathcal{B}_{C} +c\I\cdot \text{Tr})[\cdot].
\end{align}
we have the following result.

 \begin{theorem}\label{thmBH}
For $c=\frac{2}{d}$ it holds that for all tripartite biseparable states $\rho_{\text{2-sep}}\in \mathcal{P}(\mathbb{C}^d \otimes \mathbb{C}^d\otimes \mathbb{C}^d)$
\begin{align}
\Phi_{\mathcal{B}}[{\rho}_{\text{2-sep}}] \geq 0 .
\end{align}
\end{theorem}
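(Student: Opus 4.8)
The plan is to follow the same template as the proofs of Theorems \ref{thmphiT} and \ref{thmreduction}. By linearity of $\Phi_{\mathcal{B}}$ and convexity of the set of biseparable states, it suffices to verify $\Phi_{\mathcal{B}}[\rho]\geq 0$ on the extreme points, i.e.\ on pure states $\rho=\ket{\alpha}\!\bra{\alpha}_A\otimes\ket{\beta}\!\bra{\beta}_{BC}$ that are product across one fixed bipartition $A|BC$ (with $\ket{\beta}_{BC}$ possibly entangled). Since $\Phi_{\mathcal{B}}$ is manifestly symmetric under permutations of the three parties, I may assume this cut without loss of generality. The four terms of $\Phi_{\mathcal{B}}$ then split naturally into the single term acting with $\mathcal{B}$ on the ``cut'' subsystem $A$, the two terms acting with $\mathcal{B}$ on a subsystem lying on the entangled side of the cut, and the scalar correction $c\,\I\cdot\Tr$.

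For the cutting term I would use that $\mathcal{B}$ is a positive map \cite{Breuer,Hall}: since $\ket{\alpha}\!\bra{\alpha}_A\geq 0$, one has $(\mathcal{B}_A\otimes \OI_{BC})[\rho]=\mathcal{B}(\ket{\alpha}\!\bra{\alpha})\otimes\ket{\beta}\!\bra{\beta}\geq 0$. For the remaining two terms the subsystem on which $\mathcal{B}$ acts may be entangled with its complement, so positivity can fail; here I invoke the key estimate that the \emph{minimal eigenvalue} of the extended Breuer--Hall map equals $-\tfrac1d$, i.e.\ $(\mathcal{B}\otimes \OI)[\sigma]\geq -\tfrac1d\,\I$ for every state $\sigma$ on $\mathbb{C}^d\otimes\Hi$. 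Granting this, each non-cutting term is bounded below by $-\tfrac1d\I$ (using $\ket{\alpha}\!\bra{\alpha}\leq\I_A$ to absorb the spectator factor), and summing the four contributions gives $\Phi_{\mathcal{B}}[\rho]\geq \de{0-\tfrac1d-\tfrac1d+c}\I=0$ precisely for $c=\tfrac2d$, as claimed.

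The heart of the argument, and the step I expect to be the main obstacle, is thus establishing the bound $\lambda_{\min}(\mathcal{B}\otimes\OI)=-\tfrac1d$, exactly as the minimal eigenvalues $-\tfrac12$ and $-\tfrac1d$ were the crux of Theorems \ref{thmphiT} and \ref{thmreduction}. By linearity in the input it is enough to minimise $\bra{\psi}(\mathcal{B}\otimes\OI)[\ket{\phi}\!\bra{\phi}]\ket{\psi}$ over unit vectors $\ket{\phi},\ket{\psi}$. Writing $\mathcal{B}=\tfrac1{d-2}\de{\I\Tr-I-\mathcal{V}T}$, this quantity splits into three pieces: the $\I\Tr$ piece contributes $\bra{\psi}\I\otimes\rho_Y\ket{\psi}$ with $\rho_Y$ the reduced state of $\ket{\phi}$, the $-I$ piece contributes $-|\bracket{\psi}{\phi}|^2$, and the $-\mathcal{V}T$ piece contributes $-\bra{\psi}(V\otimes\I)\de{\ket{\phi}\!\bra{\phi}}^{T_X}(V\otimes\I)^\dagger\ket{\psi}$. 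The delicate part is the last contribution, where the defining property $V^T=-V$ of the skew-symmetric unitary enters: a direct evaluation on the maximally entangled vector $\ket{\varepsilon}=\sum_k\ket{kk}$ gives $\bra{\varepsilon}(V\otimes\I)F(V^\dagger\otimes\I)\ket{\varepsilon}=-\sum_{k,m}|V_{km}|^2=-\Tr(V^\dagger V)=-d$, with $F$ the swap operator, and this sign is exactly what makes the maximally entangled state the minimiser, with value $\tfrac1{d-2}\de{\tfrac1d-1+\tfrac1d}=-\tfrac1d$. The real work left is the matching lower bound: one must show that no choice of $\ket{\phi},\ket{\psi}$ can push the sum of the three pieces below $\tfrac2d-1$. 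I would carry this out by optimising over the Schmidt data of $\ket{\phi}$ together with $\ket{\psi}$, reducing to a finite-dimensional eigenvalue problem in which the skew-symmetry of $V$ again constrains the cross term; the remaining computation is routine but is where all the effort concentrates, and I would relegate it to an appendix in parallel with Appendices \ref{min eig T} and \ref{min eig R}.
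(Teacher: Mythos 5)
Your overall scaffolding is exactly that of the paper's proof (Appendix \ref{min eig BH}): reduce, by linearity, convexity, and permutation symmetry, to a pure state product across one cut; use positivity of $\mathcal{B}$ on the cutting term; bound each of the two non-cutting terms below by $-\tfrac1d\,\I$ via the minimal output eigenvalue $\mu(\mathcal{B})=\tfrac1d$; and let $c=\tfrac2d$ compensate. Your evaluation on the maximally entangled state, using $V^T=-V$ to get the cross term $-\sum_{k,m}|V_{km}|^2=-\Tr(V^\dagger V)=-d$, also matches the paper's computation of the achievability direction, $\braket{\varepsilon|(\mathcal{B}\otimes \OI)[\ketbra{\varepsilon}{\varepsilon}]|\varepsilon}=-\tfrac1d$.

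The genuine gap is the matching lower bound, which you assert is ``routine'' and defer: showing that $(\mathcal{B}\otimes\OI)[\ketbra{\psi}{\psi}]+\tfrac1d\,\I\geq 0$ for \emph{every} bipartite pure state $\ket{\psi}$, i.e.\ that $\mathcal{B}\otimes\OI+\tfrac1d\,\I\Tr$ is a positive map. This is where essentially all of the paper's effort goes, and it is not a routine optimisation over Schmidt data: the operator $V$ need not respect the Schmidt basis of $\ket{\psi}$, so the cross term couples the Schmidt coefficients to all $d^2$ matrix elements of $V$, and one cannot simply diagonalise. The paper's argument needs several nontrivial ingredients that your sketch does not supply: (i) complete positivity of $\I\Tr-\mathcal{V}T$ (its Choi matrix is, up to $\mathcal{V}$, proportional to the antisymmetric projector), which guarantees $Q=\I\otimes\rho_\psi-V_A\ketbra{\psi}{\psi}^{T_A}V_A^\dagger+\tfrac{d-2}{d}\I>0$; (ii) the equivalence $Q-\ketbra{\psi}{\psi}\geq 0 \Leftrightarrow \braket{\psi|Q^{-1}|\psi}\leq 1$, which converts the eigenvalue problem into a resolvent estimate; (iii) explicit inversion of $Q$ on its invariant subspaces $\mathrm{Span}\{\ket{i^*j},\ket{j^*i}\}$; (iv) the observation that $S_{ij}=|\braket{i|V|j^*}|^2$ is doubly stochastic, so Birkhoff's theorem reduces the estimate to permutations; and (v) a final concavity argument showing $\sum_i p_i/\de{2p_i+1-\tfrac2d}\leq 1$ over probability distributions. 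Without carrying out this chain (or an equivalent one), the claim $\mu(\mathcal{B})=\tfrac1d$ is unproven and Theorem \ref{thmBH} does not follow; the analogous bounds in Theorems \ref{thmphiT} and \ref{thmreduction} were much easier precisely because the transposition preserves the Schmidt subspaces (Lemma \ref{mu T}) and the reduction map's bound could be imported from \cite{LH}, whereas here the skew-symmetric unitary obstructs any such shortcut.
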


The proof of Theorem \ref{thmBH} is presented in Appendix \ref{min eig BH}.


As in the previous section we evaluate this map on the GHZ state of dimension $d$ and find
\begin{equation}
\Phi_{\mathcal{B}}[\text{GHZ}(d)]= \frac{d-1}{d-2} \Phi_R[\text{GHZ}(d)] - \frac{2}{d(d-1)} \I - \frac{1}{d(d-2)} (F_A+F_B+F_C), 
\end{equation}
where 
\begin{equation}
F_A := \sum_{ij} V_A \ket{jii} \bra{ijj} V_A^{\dagger}
\end{equation}
and analogously for \(F_B\) and \(F_C\). Choosing V as 
\begin{equation}
V= \begin{pmatrix}0 & \I \\ - \I & 0 \end{pmatrix},
\end{equation}
 \(\Phi_{\mathcal{B}}[\text{GHZ}(d)]\) can again be diagonalised by observing that the addends commute. The minimal eigenvalue of the latter is also \(-\frac{1}{d}\), \(\ket{\text{GHZ}(d)}\) being once more an eigenvector corresponding to this eigenvalue. The noise resistance is also identical to the previous section. The generalization to the case \(n>3\) can be carried out as in the previous section, the remarks made there being also valid in this case.
 
 Even though the map $\mathcal{B}$ is a non-decomposable one, it is very unlikely that the lifting map \eqref{eq:BH3} can detect GME states which are PPT with respect to all bipartitions, and the reason is that the identity compensation map is too naive.
In Section \ref{sec:diagM} we have seen that by exploring the symmetries of the positive map in consideration we can design a
better compensation map which may lead to very strong criterion.  
Exploring the symmetries of the Breuer-Hall map may lead to a strong GME criterion that can detect states which are PPT with respect to
all bipartitions. We leave it as an open point for further investigation. In the next Section we consider another indecomposable map, the Choi map,
and explore its symmetry constructing a very strong criterion which can detect GME states in a robust way.

\subsection{\label{sec:choi}Choi-based GME criteria}

The transposition-based GME criteria of section \ref{sec:trans} illustrates the general idea of our method and provides a GME criteria for any number of parties. 
However the application of Theorem \ref{thmphiT} to noisy states does not provide a robust criterion.
Modifying the correction term $M$ as in section \ref{sec:diagM} delivered an improved criteria for $n$-qubit GHZ states that was even revealed to be optimal for any $n$. 
We now want to optimise the criterion for $n$-qudit states by defining a GME-map based on the non-completely positive Choi-map \cite{choi}. 
{An important step in the construction of our GME-map is to} 
explore the symmetries of the Choi-map in order to add a correction term reflecting them. As a result we obtain a map which is very robust to detect GHZ-like states for any dimension and number of parties.

The construction of the Choi-based GME criterion consists, as in section \ref{sec:diagM}, of two steps. The first step is a projection onto a subspace of GHZ-like states by a mixture of local unitary operations, and the second step is the application of a non-positive map based on the Choi-map.

First of all, let us introduce some notation. A suitable generalisation of the GHZ state to $n$-parties and $d$-dimensions can be defined as
\begin{align}\label{eq.ghznd}
\ket{GHZ_n^d}=\frac{1}{\sqrt{d}}(\ket{00\ldots 0}+\ket{11\ldots1}+\ldots +\ket{(d-1)(d-1)\ldots(d-1)}).
\end{align}
And we denote by GHZ-like state an $n$-partite $d$-dimensional state that differs from \eqref{eq.ghznd} by local unitary operations. 
We will be interested in the subspace generated by cyclic permutations of \(\ket{\text{GHZ}_n^d}\), {which we can think of as the subspace that leaves invariant the family of states that generalises $X$-states to higher dimensions}. In order
to describe our subspace we are going to use the shift operator, the generator of cyclic permutations:
\begin{align}\label{eqXd}
X_d\coloneqq\begin{pmatrix} 
0&1&0&\cdots &0\\ 
0&0&1&\ddots&\vdots\\
\vdots &\ddots &\ddots &\ddots&0\\
0&\cdots&0&0&1\\
1&0&\cdots&0&0 \end{pmatrix}\,,
\end{align}
and consider the  matrix
\begin{align}
\text{GHZ}_{n,\text{cyclic}}^{d}\coloneqq \sum_{\mathbf{C} \in \{0,1,\dots, d-1\}^{n-1}} X^{\mathbf{C}}_d \ket{{GHZ_n^d}} \bra{{GHZ_n^d}} (X^{\mathbf{C}}_d)^{\dagger},
\end{align}
where for \(C\coloneqq (C_2, \dots, C_n) \in \{0,1\}^{n-1}, \quad X^C\coloneqq \I \otimes X^{C_2} \otimes \cdots \otimes X^{C_n}\).
The span of \(\{\ketbra{i}{j} \mid \bra{i} \text{GHZ}_{n,\text{cyclic}}^d \ket{j} \neq 0\}\) defines a subspace that we denote by \(\{\text{GHZ}_{n,\text{cyclic}}^d\}\). Note that \(\{\text{GHZ}_{n,\text{cyclic}}^2\}\) is the subspace considered in section \ref{sec:diagM}.

We now want to show that we can project any $n$-partite state \(\rho \in \mathcal{P} (\mathbb{C}^d\otimes \dots \otimes \mathbb{C}^d)\) onto
\(\{\text{GHZ}_{n,\text{cyclic}}^d\}\) by a mixture of local unitary operations.
In order to prove it we construct our projection by making use of the  \textit{clock-matrices}:
\begin{align}   
Z_k= \begin{pmatrix} 
 1 &0 &\cdots&0\\ 
 0& e^{2 \pi i k/d} &\ddots&0 \\  
 0& 0& e^{2 \pi i k  2/d} &\vdots\\  
 \vdots& \ddots &\ddots &0\\ 
 0&\cdots & 0& e^{2 \pi i (d-1)k/d} 
 \end{pmatrix},
 \end{align}
 where \(k \in \{0, 1, \dots, d-1\} \).
 
Furthermore, let \(\rho \in \mathcal{P} (\mathbb{C}^d\otimes \dots \otimes \mathbb{C}^d)\) be an $n$-qudit state. For \(i \in \{1, \dots, n\}\) we define
\begin{align}  
  h^d_{n,i} (\rho) \coloneqq \frac{1}{d} \sum_{k=0}^{d-1} Z_k^1 (Z_k^i)^{\dagger} \;\rho \;(Z_k^1)^{\dagger} Z_k^i,
\end{align}
where \(Z_k^i \coloneqq \I \otimes \dots \otimes \I \otimes \stackrel{\downarrow i^{\text{th}}}{Z_k} \otimes \I \otimes \dots \otimes \I\).

And we finally construct the operator:
\begin{align}\label{eqmapproj}
 \mathcal{X}_{n}^d[\cdot]\coloneqq \circ_{j=2}^n h_{n,j}^d[\cdot].
\end{align}

The following proposition states the equivalence of the the map \eqref{eqmapproj} with the projection on 
the subspace \(\{\text{GHZ}_{n,\text{cyclic}}^d\}\).

\begin{proposition}\label{prop-main}
For all \(n \in \mathbb{N},\; d \geq 2\), $\mathcal{X}_{n}^d [\rho]$ projects any $n$-partite $d$-dimensional state $\rho$ into the subspace  \(\{\text{GHZ}_{n,\text{cyclic}}^{\;d}\}\).
\end{proposition}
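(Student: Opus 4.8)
The plan is to work entirely in the matrix-unit basis $\{\ketbra{\mathbf{a}}{\mathbf{b}}\}$, where $\mathbf{a}=(a_1,\dots,a_n)$ and $\mathbf{b}=(b_1,\dots,b_n)$ range over strings in $\{0,\dots,d-1\}^n$. I would first determine exactly which matrix units survive the map $\mathcal{X}_n^d$, then characterise the target subspace $\{\text{GHZ}_{n,\text{cyclic}}^d\}$ on the same basis, and finally check that the two descriptions coincide.

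First I would analyse a single twirl $h_{n,i}^d$. The operator $Z_k^1 (Z_k^i)^\dagger$ is diagonal and multiplies $\ket{\mathbf{a}}$ by the phase $e^{2\pi i k (a_1-a_i)/d}$, so conjugating $\ketbra{\mathbf{a}}{\mathbf{b}}$ contributes the phase $e^{2\pi i k[(a_1-a_i)-(b_1-b_i)]/d}$. Averaging over $k=0,\dots,d-1$ and invoking the character identity $\frac1d\sum_{k=0}^{d-1}e^{2\pi i k m/d}$, which equals $1$ when $d\mid m$ and $0$ otherwise, collapses this to
\[
h_{n,i}^d[\ketbra{\mathbf{a}}{\mathbf{b}}] =
\begin{cases}
\ketbra{\mathbf{a}}{\mathbf{b}}, & a_1-a_i\equiv b_1-b_i \pmod{d},\\
0, & \text{otherwise.}
\end{cases}
\]
Thus $h_{n,i}^d$ is a dephasing that keeps a matrix unit precisely when the relative label $a_1-a_i$ matches $b_1-b_i$ modulo $d$. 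Since every $h_{n,j}^d$ is diagonal in the matrix-unit basis they commute, so the composition $\mathcal{X}_n^d=\circ_{j=2}^n h_{n,j}^d$ keeps $\ketbra{\mathbf{a}}{\mathbf{b}}$ iff $a_1-a_j\equiv b_1-b_j \pmod{d}$ holds simultaneously for all $j\in\{2,\dots,n\}$, and annihilates it otherwise. In particular $\mathcal{X}_n^d$ is idempotent, and being a composition of convex mixtures of local unitaries it creates no entanglement.

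Next I would pin down the subspace by computing the matrix elements of $\text{GHZ}_{n,\text{cyclic}}^d$. Each summand $X^{\mathbf{C}}_d\ket{GHZ_n^d}$ is the uniform superposition $\frac{1}{\sqrt d}\sum_m \ket{m, m-C_2,\dots,m-C_n}$, so its projector is supported only on strings $\mathbf{a}$ with $a_1-a_j\equiv C_j \pmod{d}$ for all $j\ge 2$. Hence $\bra{\mathbf{a}}\text{GHZ}_{n,\text{cyclic}}^d\ket{\mathbf{b}}\neq 0$ exactly when some $\mathbf{C}\in\{0,\dots,d-1\}^{n-1}$ satisfies $a_1-a_j\equiv C_j\equiv b_1-b_j \pmod{d}$ for every $j$; as $\mathbf{C}$ runs over the full index set, this reduces to the single condition $a_1-a_j\equiv b_1-b_j \pmod{d}$ for all $j\in\{2,\dots,n\}$. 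This is verbatim the survival condition from the previous step, so the matrix units spanning $\{\text{GHZ}_{n,\text{cyclic}}^d\}$ are exactly those fixed by $\mathcal{X}_n^d$. Consequently $\mathcal{X}_n^d$ acts as the identity on the span and kills its complement in the matrix-unit decomposition, which is the asserted projection onto $\{\text{GHZ}_{n,\text{cyclic}}^d\}$.

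The computations are routine once this language is fixed, so I do not expect a genuine analytic obstacle; the only place demanding care is bookkeeping. I would track the shift convention for $X_d$ (it sends $\ket{m}\mapsto\ket{m-1}$ with the matrix as written) consistently so the offsets $C_j$ line up, and read all label arithmetic modulo $d$. The one conceptual point I would state cleanly is that ``projecting into the subspace'' is meant in the operator sense: matching the two index conditions gives equality rather than mere containment, so $\mathcal{X}_n^d$ neither discards support inside the subspace nor retains anything outside it.
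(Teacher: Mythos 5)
Your proof is correct, and it reaches the result by a more direct route than the paper. The paper argues in two stages: it first computes $\mathcal{X}_n^d[(\ket{+}\!\bra{+})^{\otimes n}]=\text{GHZ}_{n,\text{cyclic}}^d$ by an explicit displayed-matrix calculation for $n=2$, extended to general $n$ via the subsystem-swapping basis transformations $D_{2k}$, and then concludes through the Schur-product identity $\mathcal{X}_n^d[\rho]=\mathcal{X}_n^d[(\ket{+}\!\bra{+})^{\otimes n}]\circ_s\rho$, so that the surviving entries are read off from the image of the all-ones matrix. You instead diagonalise each twirl $h^d_{n,i}$ in the matrix-unit basis from the outset: the character sum shows it is a dephasing with eigenvalues $0$ and $1$, the factors commute, and $\mathcal{X}_n^d$ therefore keeps $\ketbra{\mathbf{a}}{\mathbf{b}}$ exactly when $a_1-a_j\equiv b_1-b_j\pmod{d}$ for all $j\in\{2,\dots,n\}$; matching this against the nonzero entries of $\text{GHZ}_{n,\text{cyclic}}^d$ finishes the argument. (You should state explicitly, though it is immediate, that distinct $\mathbf{C}$ contribute on disjoint matrix units with positive weights, so no cancellation can occur in the sum defining $\text{GHZ}_{n,\text{cyclic}}^d$: the string $\mathbf{a}$ determines $\mathbf{C}$ via $C_j=a_1-a_j$.) In substance your computation is the coordinate version of the paper's \emph{alternative} group-theoretic proof --- averaging the adjoint action of $\mathbb{Z}_d^{n-1}$ projects onto the trivial isotypic component --- but it dispenses with both the representation-theoretic machinery and the paper's matrix bookkeeping, and it treats all $n$ uniformly rather than reducing to $n=2$. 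What the paper's first proof buys in exchange is the explicit identity $\mathcal{X}_n^d[\rho]=\text{GHZ}_{n,\text{cyclic}}^d\circ_s\rho$, which it re-uses later in the proof of Lemma~\ref{lemmaOD}; that identity is, however, an immediate corollary of your matrix-unit description.
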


The proof of Proposition \ref{prop-main} can be found in Appendix \ref{Aprojection}.

Now that we have a way to project any $n$-qudit state into the subspace spanned by \textit{GHZ-like cyclically permuted} states,
we can continue to discuss the positive maps employed. 

\begin{definition}
 The Choi-map $\Lambda$ of dimension $d$ is defined as follows
 \begin{align}
  \Lambda [\rho]=2 \text{Diag}[\rho]+\sum_{j=1}^{d-2}X_d^j\text{Diag}[\rho]{X_d^j}^{\dagger}-\rho,
 \end{align}
where $X_d$ is the shift operator defined in Eq. \eqref{eqXd}.
\end{definition}

We next want to look at
\begin{align}\label{eqphi}
{\phi}=\sum_{A} \Lambda_A \otimes \OI_{\bar{A}}, 
\end{align}
where $A \subset\{1,\ldots,n\}$
and $\Lambda_A$ denote the Choi-map applied to 
the subsystems contained in $A$. With the above we can consider the map
\begin{align}\label{eq_mu}
\mu[\rho] &= \phi[\rho] +(2^{n-1}-2)\left[\text{Diag}[\phi[\rho]]-\sum_{A} \text{Diag}[\rho] \right].
\end{align}

Now we are ready to state the main result of this section.

\begin{theorem}\label{thm_choigme}
\(\mu \circ \mathcal{X}_{n}^d[\rho_{2-sep}] \geq 0\).
\end{theorem}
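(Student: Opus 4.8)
The plan is to mimic the proof of Theorem~\ref{thm_eta}, with the Choi map $\Lambda_A$ playing the role of $\tilde{\sigma}_x^A\circ T_A$. First I would reduce to a single cut: since $\mathcal{X}_{n}^d$ is a mixture of local unitaries (Proposition~\ref{prop-main}) it is a separable operation, so it maps any state separable across a partition $A|\bar{A}$ to another such state, and by linearity of $\mu\circ\mathcal{X}_{n}^d$ it suffices to prove $\mu[\tilde{\rho}]\geq 0$ for $\tilde{\rho}\coloneqq\mathcal{X}_{n}^d[\rho]$ with $\rho=\rho_A\otimes\rho_{\bar{A}}$ separable across $A|\bar{A}$; the general biseparable case then follows by convexity. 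Note that $\tilde{\rho}$ is now simultaneously in the subspace $\{\text{GHZ}_{n,\text{cyclic}}^d\}$ and separable across $A|\bar{A}$, and both facts will be used.

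The key structural step is the Choi analogue of Lemma~\ref{lemmaODT}. On the subspace every off-diagonal term $\ketbra{i}{j}$ has all coordinate differences $i_m-j_m$ equal to a common nonzero $\delta$, so on any proper subset $A$ one has $i_A\neq j_A$ and hence $\text{Diag}[\ketbra{i_A}{j_A}]=0$. Since the only non-diagonal contribution of the Choi map is its $-\rho$ term, this gives $\Lambda_A\otimes\OI_{\bar{A}}[\text{OD}[\tilde{\rho}]]=-\text{OD}[\tilde{\rho}]$ for every $A$, whereas $\Lambda_A\otimes\OI_{\bar{A}}[\text{Diag}[\tilde{\rho}]]$ stays diagonal. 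Consequently $\text{OD}[\phi[\tilde{\rho}]]=(2^{n-1}-1)\,\text{OD}[\Lambda_A\otimes\OI_{\bar{A}}[\tilde{\rho}]]$ for any fixed $A$, exactly as in Lemma~\ref{lemmaODT}.

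With this in hand I would split $\mu[\tilde{\rho}]=\text{OD}[\phi[\tilde{\rho}]]+(2^{n-1}-1)\text{Diag}[\phi[\tilde{\rho}]]-(2^{n-1}-2)\sum_B\text{Diag}[\tilde{\rho}]$ and isolate the $B=A$ contribution. Combining its off-diagonal and diagonal parts reconstitutes $(2^{n-1}-1)\,\Lambda_A\otimes\OI_{\bar{A}}[\tilde{\rho}]$, which is positive because $\Lambda_A$ is a positive map and $\tilde{\rho}$ is separable across $A|\bar{A}$. What remains is the purely diagonal operator $(2^{n-1}-1)\sum_{B\neq A}\bigl(\text{Diag}[\Lambda_B\otimes\OI_{\bar{B}}[\tilde{\rho}]]-\text{Diag}[\tilde{\rho}]\bigr)$, and here the role of the subtraction $-\sum_A\text{Diag}[\rho]$ in $\mu$ becomes visible: it is tuned so that each bracket is separately nonnegative. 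A direct computation gives $\Lambda_B[\ketbra{i_B}{i_B}]=\I_B-\ketbra{i_B-1}{i_B-1}$, so the diagonal entry of $\text{Diag}[\Lambda_B\otimes\OI_{\bar{B}}[\tilde{\rho}]]$ at $(m_B,m_{\bar{B}})$ equals $\sum_{i_B}p_{(i_B,m_{\bar{B}})}-p_{(m_B+1,m_{\bar{B}})}$, with $p_i$ the diagonal of $\tilde{\rho}$; since this is a full marginal sum minus a single entry, it dominates $p_{(m_B,m_{\bar{B}})}$ whenever $d^{|B|}\geq 2$, i.e.\ $\text{Diag}[\Lambda_B\otimes\OI_{\bar{B}}[\tilde{\rho}]]\geq\text{Diag}[\tilde{\rho}]$ entrywise.

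I expect the main obstacle to be the bookkeeping around the diagonal correction rather than the off-diagonal lemma, which is essentially immediate once the subspace structure is understood. Concretely, the delicate points are (i) verifying that the Choi map spreads each diagonal projector onto the complement of a single state, which is precisely what makes the marginal-domination inequality hold, and (ii) checking that the coefficient $(2^{n-1}-2)$ together with $\sum_A\text{Diag}[\rho]=(2^{n-1}-1)\text{Diag}[\rho]$ distributes as exactly one copy of $\text{Diag}[\tilde{\rho}]$ per off-cut term $B\neq A$, so that the leftover matches $(2^{n-1}-1)\sum_{B\neq A}(\text{Diag}[\Lambda_B\otimes\OI_{\bar{B}}[\tilde{\rho}]]-\text{Diag}[\tilde{\rho}])$. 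Once these are in place, $\mu[\tilde{\rho}]$ is the sum of the positive operator $(2^{n-1}-1)\Lambda_A\otimes\OI_{\bar{A}}[\tilde{\rho}]$ and a nonnegative diagonal operator, proving the claim.
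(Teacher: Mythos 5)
Your proposal is correct and follows essentially the same route as the paper's proof: reduce to a single cut by linearity, establish the off-diagonal identity $\text{OD}[\phi[\tilde{\rho}]]=(2^{n-1}-1)\,\text{OD}[\Lambda_A\otimes\OI_{\bar{A}}[\tilde{\rho}]]$ (the paper's Lemma \ref{lemmaOD}, proved there exactly via the observation that off-diagonal elements of the subspace $\{\text{GHZ}_{n,\text{cyclic}}^d\}$ are off-diagonal with respect to every cut, so the Choi map acts as $-\OI$ on them), then recombine the $B=A$ terms into $(2^{n-1}-1)\Lambda_A\otimes\OI_{\bar{A}}[\tilde{\rho}]\geq 0$ and absorb the correction into the nonnegative diagonal terms $\text{Diag}[\Lambda_B\otimes\OI_{\bar{B}}[\tilde{\rho}]]-\text{Diag}[\tilde{\rho}]$ for $B\neq A$. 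Your only departure is that you explicitly verify, via the marginal computation $\Lambda_B[\ketbra{i_B}{i_B}]=\I_B-\ketbra{i_B\mp 1}{i_B\mp 1}$, the entrywise domination $\text{Diag}[\Lambda_B\otimes\OI_{\bar{B}}[\tilde{\rho}]]\geq\text{Diag}[\tilde{\rho}]$, which the paper asserts without proof.
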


Note that in order to express \(\mu \circ \mathcal{X}_n^d\) in the form of \(\Phi_{\text{GME}}\) of equation \eqref{eq:phidef} in section \ref{sec.results}, we should set \(\mathcal{U}^{(A)}= \mathcal{X}_n^d\) for every \(A\) 
and \(M=(2^{n-1}-2)\left(\text{Diag}[\phi[\mathcal{X}_n^d[\rho]]]-\sum_{A} \text{Diag}[\mathcal{X}_n^d[\rho]] \right)\). Theorem \ref{thm_choigme} gives us a sufficient condition for a state to be GME: if upon projecting an $n$-partite state $\rho$ into
the {GHZ-like cyclically permuted} subspace and applying the map $\mu$ results into a negative eigenvalue, 
one can assure that the state $\rho$ is genuinely $n$-partite entangled.

\begin{proof}
The proof is analogous to the one of Theorem \ref{thm_eta} of section \ref{sec:diagM}. Indeed, we make use of the following Lemma that will be proven in Appendix \ref{Aprooflemma}.

\begin{lemma}\label{lemmaOD}
 \begin{align}
\text{OD}\circ \phi[\mathcal{X}_{n}^d[\rho]] = (2^{n-1}-1)\text{OD}[\Lambda_A\otimes I_{\bar{A}}[\mathcal{X}_{n}^d[\rho]]]\;\; \forall \, \rho, \forall \, A,
 \end{align}
 where $\text{OD}[X]=X-\text{Diag}[X]$.
\end{lemma}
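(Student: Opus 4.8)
The plan is to exploit the very special matrix structure of states supported on the cyclic subspace \(\{\text{GHZ}_{n,\text{cyclic}}^d\}\), together with the elementary action of the Choi map on diagonal versus off-diagonal matrix elements. First I would make explicit which matrix elements survive the projection \(\mathcal{X}_n^d\). Writing computational-basis vectors as tuples \(\vec a = (a_1,\dots,a_n)\), the generators \(X^{\mathbf C}_d \ket{GHZ_n^d}\) are equal-weight superpositions of strings of the form \((a, a+C_2, \dots, a+C_n)\), all sums taken mod \(d\). Hence, by Proposition \ref{prop-main}, \(\tilde\rho \coloneqq \mathcal{X}_n^d[\rho]\) is supported on \(\text{span}\{\,\ketbra{\vec a}{\vec b} : \vec a - \vec b \equiv \delta(1,\dots,1) \pmod d \text{ for some } \delta\,\}\); that is, its only nonzero entries \(\tilde\rho_{\vec a \vec b}\) are those for which the difference \(\vec a - \vec b\) is a constant vector. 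This ``constant-difference'' characterisation is the structural fact on which the whole argument rests.

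Next I would split \(\tilde\rho = \text{Diag}[\tilde\rho] + \text{OD}[\tilde\rho]\) and track each piece through a single term \(\Lambda_A \otimes \OI_{\bar A}\). The two facts I need about the Choi map follow immediately from its definition: on a diagonal element it returns a diagonal operator (the combination \(2\,\text{Diag} + \sum_j X_d^j\,\text{Diag}\,(X_d^j)^\dagger\) produces only diagonal terms, and \(-\rho\) preserves diagonality), while on an off-diagonal element \(\ketbra{x}{y}\) with \(x\neq y\) it returns simply \(-\ketbra{x}{y}\), since \(\text{Diag}[\ketbra{x}{y}]=0\). The key observation is then that the constant-difference structure is inherited by every block: for an off-diagonal generator \(\ketbra{\vec a}{\vec b}\) of \(\tilde\rho\) we have \(\vec a - \vec b \equiv \delta(1,\dots,1)\) with \(\delta \neq 0\), so in particular \(a_A \neq b_A\) for \emph{every} nonempty subset \(A\). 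Consequently \(\Lambda_A \otimes \OI_{\bar A}\) acts on this element exactly as multiplication by \(-1\), independently of the bipartition \(A\), whereas it sends diagonal elements to diagonal elements.

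From here the conclusion is a bookkeeping step. The previous paragraph shows \(\text{OD}\big[\Lambda_A \otimes \OI_{\bar A}[\tilde\rho]\big] = -\,\text{OD}[\tilde\rho]\) for each bipartition \(A\), the right-hand side being manifestly independent of \(A\). Summing \(\phi = \sum_A \Lambda_A \otimes \OI_{\bar A}\) over the \(2^{n-1}-1\) bipartitions and taking the off-diagonal part gives
\begin{align}
\text{OD}\circ\phi[\mathcal{X}_n^d[\rho]] &= -\,(2^{n-1}-1)\,\text{OD}[\tilde\rho] \nonumber\\
&= (2^{n-1}-1)\,\text{OD}\big[\Lambda_A\otimes \OI_{\bar A}[\mathcal{X}_n^d[\rho]]\big],
\end{align}
which is precisely the claimed identity. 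I expect the only genuinely delicate point to be the first step: verifying that the support of \(\mathcal{X}_n^d[\rho]\) is exactly the constant-difference subspace and, crucially, that this property passes to the restriction to every block \(A\) so that the relevant block elements stay strictly off-diagonal. Once that is secured, the behaviour of the Choi map on diagonal and off-diagonal elements forces every term of \(\phi\) to contribute the same off-diagonal part, and the factor \(2^{n-1}-1\) is just the number of bipartitions. This mirrors exactly the role played by \(\tilde\sigma_x^A \circ T_A\) in Lemma \ref{lemmaODT}, the difference being that there the off-diagonal elements are left invariant, whereas here they pick up a sign.
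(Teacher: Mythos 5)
Your proof is correct and follows essentially the same route as the paper: both rest on the observation that the entries surviving $\mathcal{X}_n^d$ have constant difference $\vec a - \vec b \equiv \delta(1,\dots,1)$, so every nonzero off-diagonal element is strictly off-diagonal with respect to \emph{every} bipartition, whence each $\Lambda_A\otimes \OI_{\bar A}$ acts on it as $-1$ while mapping diagonal elements to diagonal operators (the paper's Eqs.~\eqref{equ:ODpartition} and \eqref{equ:ODcommute}), and summing over the $2^{n-1}-1$ bipartitions gives the factor. One minor remark: you only need the support of $\mathcal{X}_n^d[\rho]$ to be \emph{contained} in the constant-difference subspace, not exactly equal to it, and that containment is immediate from the Schur-product form $\mathcal{X}_n^d[\rho]=\text{GHZ}_{n,\text{cyclic}}^d\circ_s\rho$ used in the paper.
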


Now consider an $n$-partite state biseparable with respect to partition $A$, $\rho=\rho_A \otimes \rho_{\bar{A}}$. We note that $\tilde{\rho}\coloneqq\mathcal{X}_{n}^d(\rho)$ is still biseparable with respect to the partition \(A \lvert \bar{A}\) and hence,
like in section \ref{sec:diagM}:
\begin{align}
\begin{split}
 \mu[\tilde{\rho}]=& \text{Diag}[\phi[\tilde{\rho}]]+\text{OD}[\phi[\tilde{\rho}]] +(2^{n-1}-2)\left(\text{Diag}[\phi[\tilde{\rho}]]-\sum_A\text{Diag}[\tilde{\rho}] \right)\\
 =&\text{Diag}[\Lambda_A \otimes \OI_{\bar{A}}[\tilde{\rho}]]+\sum_{B \neq A} \text{Diag}[\Lambda_B \otimes \OI_{\bar{A}}[\tilde{\rho}]]\\
& +(2^{n-1}-1) \text{OD}[\Lambda_A \otimes \OI_{\bar{A}}[\tilde{\rho}]]\\
&+(2^{n-1}-2) \text{Diag}[\Lambda_A\otimes \OI_{\bar{A}} [\tilde{\rho}]]\\
&+(2^{n-1}-2) \sum_{B \neq A}\text{Diag}[\Lambda_B\otimes \OI_{\bar{B}} [\tilde{\rho}]]\\
&- (2^{n-1}-1) \sum_{B \neq A }\text{Diag}[\tilde{\rho}]\\
=&\overbrace{\text{Diag}[\Lambda_A \otimes \OI_{\bar{A}}[\tilde{\rho}]]+(2^{n-1}-2) \text{Diag}[\Lambda_A \otimes \OI_{\bar{A}}(\tilde{\rho})]+(2^{n-1}-1)\text{OD}[\Lambda_A\otimes \OI_{\bar{A}}[\tilde{\rho}]]}^{ = (2^{n-1}-1) \Lambda_A \otimes \OI_{\bar{A}}[\tilde{\rho}] \geq 0}\\
 &+ (2^{n-1}-1)\sum_{B\neq A}(\underbrace{\text{Diag}[\Lambda_B\otimes I_{\bar{B}}[\tilde{\rho}]]}_{\geq \text{Diag} (\tilde{\rho})}-\text{Diag}[\tilde{\rho}])\\
 \geq& 0,
\end{split}
 \end{align}
 where in the second step we have used Lemma \ref{lemmaOD}.

The proof of Theorem \ref{thm_choigme} for an arbitrary biseparable state $\rho$ follows again from the linearity of the map $\mu \circ \mathcal{X}_n^d$.
\end{proof}

We now want to look at some states detected by \(\mu \circ \mathcal{X}_n^d\). Our first example is the noisy $n$-partite $d$-dimensional GHZ state:
\begin{align}
 \rho_{\text{GHZ}}^{n,d}=\alpha \ketbra{GHZ_n^d}{GHZ_n^d}+(1-\alpha) \frac{\I}{d^n}.
\end{align}

Applying our map to this state gives us the critical value $\alpha_c$ for which genuine multipartite entanglement is detected by it:
\begin{align}
\alpha_c=\frac{(d-2) (2^{n-1}-1) + 1} {(d-2) (2^{n-1}-1) + 1 + (d-2) d^{n-1}},
\end{align}
which for $d>2$ fixed goes to zero exponentially with the number of parties, as indeed \(\alpha_c= o((\frac{2}{d})^n) \quad (n \rightarrow \infty )\). 
This means that for \(d >2\) 
and $n$ big enough, our map detects the noisy $n$-qudit GHZ state with up to almost \(100 \%\) white noise.

Furthermore, since our map is based on the non-decomposable Choi-map it can also detect
genuine multipartite entanglement in systems which are positive under partial transposition with respect to any bipartition. 
To illustrate that we can consider the family of 3-qutrit states $\DE{\rho(\lambda)}_{\lambda \in \mathbb{R}^+}$ introduced in Ref. \cite{hubersengupta}, see also Appendix \ref{app.qutritppt} for their definition.

The states  $\rho(\lambda)$ have the property to be invariant under partial transposition, therefore partial transposition based criteria cannot even detect bipartite entanglement. By applying our map we recover  the results obtained in Ref.  \cite{hubersengupta}, showing that the states  $\rho(\lambda)$ are GME for $0\leq \lambda < \frac{1}{3}$, which to our knowledge is still the best known detection range. Setting \(\lambda = \frac{1}{9}\), the noisy state
\begin{align}
\rho_{noise}(p)=p\frac{\I}{27}+(1-p)\rho\de{\frac{1}{9}}
\end{align}
is detected with our criteria with white noise up to $\frac{9}{179}\approx 5 \%$. Apart from Refs.~\cite{hubersengupta,relaxations} and Ref.~\cite{MarcoP} we are not aware of any GME detection technique powerful enough to achieve this feature. \\

We have just seen that our Choi based criteria is able to detect the noisy $n$-qudit GHZ state with noise resistance increasing with $n$. 
It can also detect sates that are PPT with respect to every cut with 
the best known white noise resistance. It is thus fair to say that our scheme enabled us to derive a strong criterion. 
The key {for obtaining such a criterion was to explore the symmetries of the Choi-map,} 
namely that it acts the same on every off-diagonal element of a state, and to be able to project in a subspace,
\(\{\text{GHZ}_{n, \text{cyclic}}^d\}\), exhibiting this symmetry without creating any entanglement.

We believe that this strategy can be applied to other maps, such as the Breuer-Hall map to cite only one. 
{One then is left with the task of identifying the symmetry that characterizes the map, and more difficultly to find a way of making a projection
  into a subspace exhibiting this symmetry without creating any entanglement.}
{In order to  shed some light to the possibility of} 
accomplishing the second step of this process,  we include an alternative proof of Proposition \ref{prop-main} in Appendix \ref{Aprojection}. Indeed, since the alternative proof is based on group theory, we believe it will be easier to adapt it to a map having another symmetry.

\section{Discussion}\label{sec.discussion}
In this manuscript we investigate the generalisation of the bipartite concept of positive maps to GME-maps in order to detect 
genuine multipartite entanglement, therefore closing a gap in the set of available tools for revealing GME. We introduce a general method of constructing GME-maps based on positive but non-completely positive maps. We have furthermore illustrated our construction method by generalising the paradigmatic PPT criterion to the multipartite case. A systematic way of generalising any bipartite map has furthermore been introduced and lifting of the Reduction as well as the Breuer-Hall map have been carried out using it. As a first application, we showed that our approach provides a novel solution to the problem of determining the bi-separability threshold of GHZ mixtures. Moreover, by exploring the symmetries of the Choi-map we were able to design a very robust criterion (for GHZ-like states), which can recover the results of some of the strongest known criteria for 3-qutrit and even reveal entanglement in states that are PPT with respect to every partition. We are therefore strongly convinced that this construction could be a first step on a path to a better understanding of GME, even if only for the sole purpose to derive new GME-witnesses by lifting known positive maps to GME-maps.\\

There are, however, still many open questions concerning our construction. In the bipartite case, one of the upsides of the 
approach based on positive maps is that in some cases it is possible to achieve invariance under all local unitary for
the resulting criteria (as with PPT). This also directly leads to the fact that every entangled pure state can easily be detected by this one simple map. 
Now the question that arises here is analogous: can there be a single multipartite map that reveals GME for all locally unitarily related (or even all) GME pure states? While our maps are performing very strongly in terms of noise resistance and can even detect states which are PPT with respect to every bipartition, they are not invariant under all local unitaries and will probably fail to reveal GME of all GME pure states. The main challenge appears due to the fact that the negative eigenstates under every partition need to mutually overlap, in order for the construction to work. The lack of a unique Schmidt decomposition for multipartite systems and the hardness of computing the tensor rank make this a tough challenge, that we were only able to overcome in scenarios with a high degree of symmetry.

Moreover, from a more technical point of view, in order to develop new examples of the maps we propose it would be crucial to have a general method 
of deriving a valid, nontrivial, compensation map $M$, without having to first project into a subspace of the multipartite system (as in this way we expect to miss many important features and focus only on a small niche of the richness of GME correlations).

Finally, an important open point is whether every GME state can be detected by a GME-map lifted from positive maps. For the GME-witnesses a similar construction method was indeed recently shown to suffice to reveal the entanglement of any GME-state \cite{relaxations}.
An affirmative answer would show that among all the GME-maps, one would only need to consider the ones lifted from positive maps to fully characterize genuine multipartite entanglement, thus greatly simplifying the search for such maps.

\section{Acknowledgement}
FC and MH acknowledge funding from Swiss National Science Foundation (AMBIZIONE PZ00P2$\_$161351). M. H. furthermore acknowledges funding from the Austrian Science Fund (FWF) through the START project Y879-N27. GM acknowledges financial support from Funda\c{c}\~{a}o de Amparo \`{a} Pesquisa do Estado de Minas Gerais (FAPEMIG), NWO VIDI and ERC Starting Grant. LL acknowledges financial support from the European Union under the European Research Council (AdG IRQUAT No. 267386), the Spanish MINECO (Project No. FIS2013-40627-P), and the Generalitat de Catalunya (CIRIT Project No. 2014 SGR 966).

\appendix

\section{\label{app:W_GME}Deriving \(W_{\text{GME}}\)}

{In this section we want to derive the form of \(W_{\text{GME}}\) as presented in \eqref{eq.W}}. We know from Theorem III.1. of \cite{relaxations} that for any GME-state \(\rho_{\text{GME}}\) there exists an operator \(Q\) as well as operators \(\{T_A\}_A\) such that each element of the set of weakly optimal bipartite witnesses \(\{W_A\}_A\) defined by \(W_A = Q+T_A\) detects \(\rho_{\text{GME}}\) and such that

\begin{equation}
W\coloneqq Q + \sum_A [T_A]_+,
\end{equation}
where \([T_A]_+\) denotes the projection onto the positive semidefinite cone of \(T_A\), is a multipartite witness detecting \(\rho_{\text{GME}}\). Furthermore by Corollary III.2 of \cite{relaxations}, 
for each $A$, the \(W_A\) can
be assumed to be of the form \(W_A= \Lambda_A^* \otimes \OI_{\bar{A}} [ \ket{\psi_A} \bra{\psi_A}]\), where
for each $A$, \(\Lambda_A\) is a positive map detecting \(\rho_{\text{GME}}\) and \(\ket{\psi_A}\) is chosen such that 
\(\bra{\psi_A} \Lambda_A \otimes \OI_{\bar{A}} ( \rho_{\text{GME}}) \ket{\psi_A} <0\). Then by defining
\begin{equation}
M_{\{\Lambda_A, \ket{\psi_A}\}_A} \coloneqq \sum_A [ (2^{n-1}-1) [T_A]_+ - T_A] \geq 0;
\end{equation}
we have that 
\begin{align*}
W_{\text{GME}}&\coloneqq \sum_A \Lambda_A^* \otimes \OI_{\bar{A}} [ \ket{\psi_A} \bra{\psi_A}] + M_{\{\Lambda_A, \ket{\psi_A}\}_A}\\
&= \sum_A (Q+T_A)+ \sum_A ((2^{n-1}-1) [T_A]_+ - T_A)\\
&= (2^{n-1}-1) W;
\end{align*}
proving that there exists a GME-witness detecting \(\rho_{\text{GME}}\) of the desired form, namely \(W_{\text{GME}}\).

\section{Minimal output eigenvalue of positive maps} \label{min eig pos}

Given a positive, trace-preserving map $\zeta$ acting on $d\times d$ matrices, let us consider its {\it minimal output eigenvalue} as defined by
\begin{equation}
\mu(\zeta)\, \equiv\, - \min_{\rho} \left\{\text{EV}_{\min} \left((\zeta\otimes I)[\rho]\right)\right\}\, , \label{mu}
\end{equation}
where the minimisation is over all normalised bipartite states $\rho\in\mathcal{P}\left(\mathds{C}^d\otimes \mathds{C}^n \right)$ with $n$ an arbitrary positive integer, and $\text{EV}_{\min}$ denotes the minimal eigenvalue. The reason for considering this rather than the minimal eigenvalue itself is that for positive but not completely positive maps (the ones we are interested in) the former is a positive quantity. First note that the following elementary properties hold true:
\begin{enumerate}
\item $\mu$ is convex;
\item \(\mu(\zeta)= - \min_{\ket{\psi}} \left\{\text{EV}_{\min} \left((\zeta\otimes I)[\ket{\psi}\!\!\bra{\psi}]\right)\right\}=\max_{\ket{\psi}} \left\{-\text{EV}_{\min}\left((\zeta\otimes I)[\ket{\psi}\!\!\bra{\psi}]\right)\right\}\);
\item $\zeta$ is completely positive iff $\mu(\zeta) = 0$;
\item if $\phi$ is completely positive, unital and trace-preserving, then $\mu(\phi\zeta) \leq \mu(\zeta)$.
\end{enumerate}

Let us comment on the above claims. The convexity of $\mu(\zeta)$ in $\zeta$ follows immediately from its definition. Moreover, the minimisation in \eqref{mu} can be restricted to rank-one projectors as follows from the fact that any state $\rho$ can be expressed as a convex combination of pure states. Next, complete positivity of $\zeta$ implies by its very definition that $\mu(\zeta)\leq 0$. However, choosing a companion system with dimension $n>d$ shows that this bound can be achieved. Finally, the last property is a straightforward calculation.

In our approach, the minimal output eigenvalue of a positive map $\zeta$ is half the coefficient $c$ of the compensation map $\I \cdot \text{Tr}$ that one needs to add in order to make $\zeta_A\otimes \OI_{BC}+\OI_A\otimes \zeta_B\otimes \OI_C+\OI_{AB}\otimes \zeta_C$ positive on biseparable states. The paradigmatic example of this method is Theorem \ref{thmphiT}, but these ideas can be pushed further to include the more general case of a higher number of parties (see Theorem \ref{thmphiTn}). Since this minimal output eigenvalue problem is so important to solve a number of illustrative examples, we devote this appendix to the computation of $\mu(\zeta)$ for some notable positive maps.

\subsection{Partial transpose} \label{min eig T}

It is well known that $\mu(T)=1/2$ \cite{partialT}, but we include a proof here. 

\begin{lemma} \label{mu T}
The partial transposition $T$ satisfies $\mu(T)=1/2$.
\end{lemma}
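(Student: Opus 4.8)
The plan is to prove $\mu(T)=1/2$ by establishing the upper bound $\mu(T)\le 1/2$ and the matching lower bound $\mu(T)\ge 1/2$ separately. Recall that $\mu(T)=-\min_{\ket{\psi}}\mathrm{EV}_{\min}\big((T\otimes I)[\proj{\psi}]\big)$ by property 2 of the minimal output eigenvalue, so it suffices to reason about the partial transpose of pure-state projectors on $\mathds{C}^d\otimes\mathds{C}^n$.

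For the upper bound, I would exhibit a single pure state whose partial transpose has minimal eigenvalue exactly $-1/2$. The natural candidate is a two-dimensional maximally entangled state embedded in the larger space, say $\ket{\psi}=\tfrac{1}{\sqrt{2}}(\ket{00}+\ket{11})$. A direct computation shows that $(T\otimes I)[\proj{\psi}]=\tfrac{1}{2}(\ket{00}\bra{00}+\ket{11}\bra{11}+\ket{01}\bra{10}+\ket{10}\bra{01})$, whose spectrum on the relevant $2\times 2$ block is $\{+1/2,-1/2\}$. Hence the minimal eigenvalue is $-1/2$, giving $\mu(T)\ge 1/2$ and showing the bound is attained.

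For the lower bound (i.e. $\mu(T)\le 1/2$, so that no state does worse than $-1/2$), the key step is to show that for any normalised pure state $\ket{\psi}$ and any unit vector $\ket{\chi}$, one has $\bra{\chi}(T\otimes I)[\proj{\psi}]\ket{\chi}\ge -1/2$. I would use the Schmidt decomposition $\ket{\psi}=\sum_k \sqrt{\lambda_k}\,\ket{k}\ket{e_k}$ with $\sum_k\lambda_k=1$, compute $(T\otimes I)[\proj{\psi}]=\sum_{k,l}\sqrt{\lambda_k\lambda_l}\,\ket{l}\bra{k}\otimes\ket{e_k}\bra{e_l}$, and then bound the expectation value in an arbitrary $\ket{\chi}$. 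Writing $\ket{\chi}=\sum_m \ket{m}\otimes\ket{\xi_m}$, the expectation becomes $\sum_{k,l}\sqrt{\lambda_k\lambda_l}\,\bracket{\xi_l}{e_k}\bracket{e_l}{\xi_k}$, and the task is to show this real quantity is at least $-1/2$. The standard trick is to split into diagonal ($k=l$) and off-diagonal contributions, note the diagonal part is nonnegative, and pair the $(k,l)$ and $(l,k)$ off-diagonal terms so that each pair is bounded below using $2\,\mathrm{Re}(z)\ge -|a|^2-|b|^2$ type estimates together with $\sum_k\lambda_k=1$.

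The main obstacle will be carrying out this off-diagonal estimate cleanly: one must extract a factor of $-1/2$ uniformly while respecting the normalisation constraints on both $\{\lambda_k\}$ and the components of $\ket{\chi}$, without the bound degrading as $d$ or $n$ grows. I expect the cleanest route is to recognise that $(T\otimes I)[\proj{\psi}]+\tfrac12 I$ should be shown positive semidefinite, which reduces the problem to verifying $(T\otimes I)[\proj{\psi}]\ge -\tfrac12 I$ as an operator inequality; equivalently, since partial transpose of a rank-one projector $\proj{\psi}$ has eigenvalues $\pm\sqrt{\lambda_k\lambda_l}$ on each $2$-dimensional $(k,l)$ block plus the diagonal, one sees directly that the most negative eigenvalue is $-\max_{k<l}\sqrt{\lambda_k\lambda_l}\ge -1/2$, the last inequality following from the AM–GM bound $\sqrt{\lambda_k\lambda_l}\le(\lambda_k+\lambda_l)/2\le 1/2$. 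This block-diagonalisation of the partial transpose of a pure state is the crux, and once it is in hand both bounds follow immediately, completing the proof that $\mu(T)=1/2$.
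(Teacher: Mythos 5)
Your proof is correct and, in its final form, takes essentially the same route as the paper's: reduce to pure states, use the Schmidt decomposition to block-diagonalise $(T\otimes I)[\ket{\psi}\!\bra{\psi}]$ into invariant $2\times 2$ blocks with eigenvalues $\pm\sqrt{\lambda_k\lambda_l}$ plus a nonnegative diagonal part, and bound $\sqrt{\lambda_k\lambda_l}\leq 1/2$ by AM--GM, with the two-qubit maximally entangled state showing the value is attained. (Only cosmetic remarks: your ``upper''/``lower'' labels are swapped relative to the inequalities $\mu(T)\geq 1/2$ and $\mu(T)\leq 1/2$ you actually establish, and like the paper you implicitly take the transposition in the Schmidt basis --- the general case just replaces the block $\mathrm{Span}\{\ket{kl},\ket{lk}\}$ by $\mathrm{Span}\{\ket{k^*l},\ket{l^*k}\}$ with conjugation in the transposition basis, leaving the spectrum unchanged.)
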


\begin{proof}
We already know that it is sufficient to examine the minimal output eigenvalue of a pure state $\ket{\psi}$. Hence, take $\ket{\psi}\in\mathds{C}^d\otimes\mathds{C}^n$ whose Schmidt decomposition is $\ket{\psi}=\sum_i c_i \ket{ii}$. It is easy to see that $(T\otimes \OI)(\ketbra{\psi}{\psi})$ preserves the subspaces $W_{ij}\coloneqq \text{Span}\{\ket{ij},\ket{ji}\}$. Note that we have here for simplicity taken the partial transpose with respect to the Schmidt basis. If the partial transpose was taken with respect to another product basis, the invariant subspace would be $W_{ij}\coloneqq \text{Span}\{\ket{i^*j},\ket{j^*i}\}$, the complex conjugation being taken in the same basis. On the one hand, for $i\neq j$ we obtain
\begin{equation}
(T\otimes \OI)[\ketbra{\psi}{\psi}]\big|_{W_{ij}}\, =\, \begin{pmatrix} 0 & c_i c_j \\ c_i c_j & 0 \end{pmatrix} ,
\end{equation}
where the basis chosen for the above matrix representation is naturally $\{\ket{ij},\ket{ji}\}$. On the other hand, all vectors $\ket{ii}$ are eigenvectors of $(T\otimes \OI)(\ketbra{\psi}{\psi})$ with eigenvalue $c_i^2$. From these identities it is clear that each subspace $W_{ij}$ ($i\neq j$) contributes with a negative eigenvalue $-c_i c_j$, and that these are the only negative eigenvalues of $(T\otimes \OI)(\ketbra{\psi}{\psi})$. Since $\sum_i c_i^2=1$, it is straightforward to verify that $c_i c_j\leq \frac12$, with the upper bound being achieved by $c_1=c_2=\frac{1}{\sqrt{2}}$ and $c_i=0$ for $i>2$.
\end{proof}

We now want to prove Theorem \ref{thmphiT} of the main text.

\begin{proof}[Proof of Theorem \ref{thmphiT}]
Employing Lemma \ref{mu T} we see that
\begin{align*}
\text{EV}_{\min}&(\Phi_T[\rho_{\text{2-sep}}])= \text{EV}_{\min}\left(\sum_A T_A \otimes \OI_{\bar{A}} \left[ \sum_{A'} \sum_i p_{A'}^i \cdot \rho_{A'}^i \otimes \rho_{\bar{A'}}^i\right] + \I\right)\\
=&\text{EV}_{\min}\left(\sum_{A'}\sum_i \left( \sum_{A=A'} T_{A'} \otimes \OI_{\bar{A'}} \left[ p_{A'}^i \cdot \rho_{A'}^i \otimes \rho_{\bar{A'}}^i\right]+\sum_{A\neq A'} T_A \otimes \OI_{\bar{A}} \left[ p_{A'}^i \cdot\rho_{A'}^i \otimes \rho_{\bar{A'}}^i\right]\right) + \I\right)\\
\geq& \sum_{A'}\sum_i \left[ \sum_{A=A'} p_{A'}^i  \underbrace{\text{EV}_{\min}\left(T_{A'} \otimes \OI_{\bar{A'}} \left[ \rho_{A'}^i \otimes \rho_{\bar{A'}}^i\right]\right)}_{\geq 0}+\sum_{A\neq A'}  p_{A'}^i \underbrace{\text{EV}_{\min}\left(T_A \otimes \OI_{\bar{A}} \left[\rho_{A'}^i \otimes \rho_{\bar{A'}}^i\right]\right)}_{\geq -\frac{1}{2}}\right]\\
& \qquad \qquad+ \underbrace{\text{EV}_{\min}\left(\I\right)}_{=1}\\
\geq& 0;
\end{align*}
since \(\sum_{A \neq A'}\) has two terms for 3 parties and \(\sum_{A'} \sum_i p^i_{A'}=1\). {This proves the first claim. To see that $c=1$ is optimal, it is enough to apply $\Phi_{T}$ to the bi-separable state on $ABC$ formed by a tensor product of a maximally entangled state of $AB$ and an arbitrary pure state of $C$.}
\end{proof}

The generalization of the above map to $n$ parties follows straightforwardly.

\begin{corollary}
\label{corphiTn}
For $\Phi_{T}= \sum_A T_A \otimes \OI_{\bar{A}} + \frac{2^{n-1}-2}{2}\, \I\cdot \text{{\it Tr}}$ and for any n-partite biseparable states $\rho_{2-sep}$ we have
\begin{align}
\Phi_T[{\rho}_{2-sep}] \geq 0.
\end{align}
\end{corollary}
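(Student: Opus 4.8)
The plan is to follow the proof of Theorem~\ref{thmphiT} essentially verbatim, the only genuinely $n$-dependent ingredient being a combinatorial count of the bipartitions. First I would use the linearity of $\Phi_T$ together with the superadditivity of the minimal eigenvalue, $\text{EV}_{\min}(X+Y)\geq \text{EV}_{\min}(X)+\text{EV}_{\min}(Y)$ (which follows at once from $\text{EV}_{\min}(X)=\min_{\|\psi\|=1}\langle\psi|X|\psi\rangle$), to reduce the claim to a single product term. Concretely, any biseparable state is a convex mixture $\sum_{A'}\sum_i p^i_{A'}\,\rho^i_{A'}\otimes\rho^i_{\bar{A'}}$ over all cuts, so by convexity it suffices to show $\Phi_T[\rho_{A'}\otimes\rho_{\bar{A'}}]\geq 0$ for one fixed cut $A'|\bar{A'}$.

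Next I would split the sum $\sum_A T_A\otimes \OI_{\bar A}$ applied to $\rho_{A'}\otimes\rho_{\bar{A'}}$ into the single \emph{matching} term $A=A'$ and the \emph{non-matching} terms $A\neq A'$. For the matching term one has $T_{A'}\otimes\OI_{\bar{A'}}[\rho_{A'}\otimes\rho_{\bar{A'}}]=T[\rho_{A'}]\otimes\rho_{\bar{A'}}\geq 0$, since transposing a positive operator preserves positivity. For each non-matching term I would invoke Lemma~\ref{mu T}, which yields $\mu(T)=1/2$ \emph{independently} of the dimension of the transposed block, hence $\text{EV}_{\min}\big(T_A\otimes\OI_{\bar A}[\rho_{A'}\otimes\rho_{\bar{A'}}]\big)\geq -\tfrac12$ for every $A\neq A'$.

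The crucial count is that $n$ parties admit exactly $2^{n-1}-1$ bipartitions $A|\bar A$ (one unordered pair per proper nonempty subset), of which precisely one equals $A'$ and the remaining $2^{n-1}-2$ do not. Combining the bounds with the compensation term, whose minimal eigenvalue on a normalised state equals $\tfrac{2^{n-1}-2}{2}$, gives
\begin{align*}
\text{EV}_{\min}\big(\Phi_T[\rho_{A'}\otimes\rho_{\bar{A'}}]\big)\ \geq\ 0-(2^{n-1}-2)\,\tfrac12+\tfrac{2^{n-1}-2}{2}\ =\ 0,
\end{align*}
so the compensation exactly cancels the worst-case negative contribution; extending back to arbitrary biseparable states by convexity closes the argument.

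I do not expect a serious obstacle here: the two points requiring care are verifying that the bound $\mu(T)=1/2$ of Lemma~\ref{mu T} is genuinely uniform in the dimension of the transposed block $A$ (it is, since the Schmidt-decomposition argument there never uses that dimension), and getting the count $2^{n-1}-2$ of mismatched bipartitions right. Optimality of the constant is not asserted in the statement, but if desired it would follow from the same witness used for Theorem~\ref{thmphiT}, namely a maximally entangled pair on two parties tensored with arbitrary pure states on the remaining $n-2$.
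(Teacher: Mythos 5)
Your proposal is correct and follows essentially the same route as the paper's own proof: decompose the biseparable state over cuts, use superadditivity of $\text{EV}_{\min}$, bound the matching term by $0$ and each of the $2^{n-1}-2$ mismatched terms by $-\tfrac12$ via Lemma~\ref{mu T}, and observe that the compensation term $\tfrac{2^{n-1}-2}{2}\,\I\cdot\text{Tr}$ cancels this exactly. Your explicit remarks on the dimension-independence of $\mu(T)=1/2$ and on the bipartition count are precisely the two points the paper's argument relies on.
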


\begin{proof}
As before we obtain
\begin{align*}
\begin{split}
\text{EV}_{\min}&(\Phi_T[\rho_{\text{2-sep}}])\\
&\geq \sum_{A'}\sum_i \left[ \sum_{A=A'} p^i_{A'} \underbrace{\text{EV}_{\min}\left(T_{A'} \otimes \OI_{\bar{A'}} \left[ \rho_{A'}^i \otimes \rho_{\bar{A'}}^i\right]\right)}_{\geq 0}+\sum_{A\neq A'} p^i_{A'} \underbrace{\text{EV}_{\min}\left(T_A \otimes \OI_{\bar{A}} \left[ \rho_{A'}^i \otimes \rho_{\bar{A'}}^i\right]\right)}_{\geq -\frac{1}{2}}\right]\\
& \qquad \qquad+ \frac{2^{n-1}-2}{2} \underbrace{\text{EV}_{\min}\left(\I\right)}_{=1}\geq 0;
\end{split}
\end{align*}
since this time \(\sum_{A \neq A'}\) has \(2^{n-1}-2\) terms.
\end{proof}

Theorem \ref{thmphiTn} of the main text follows from Corollary \ref{corphiTn} since \(\text{EV}_{\min}\) is invariant under unitary transformations. Note however that already for 4 parties \(\Phi_T\) fails to detect the \(W\)-state, whereas \(\Phi_{Tx}\) detects the GHZ-state for any number of parties.

\subsection{Reduction map} \label{min eig R}

The reduction map is given by $\Lambda=\frac{1}{d-1}(\mathds{1}\Tr - I)$. Here we normalised it to make it trace-preserving.

\begin{lemma}
For the reduction map we have $\mu(\Lambda)=\frac{1}{d}$.
\end{lemma}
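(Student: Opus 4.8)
The plan is to follow exactly the template of the proof of Lemma~\ref{mu T} for the partial transpose. By property 2 of the minimal output eigenvalue it suffices to minimise over pure inputs, so I would take $\ket{\psi}\in\mathds{C}^d\otimes\mathds{C}^n$ with Schmidt decomposition $\ket{\psi}=\sum_i c_i\ket{ii}$, where $\sum_i c_i^2=1$ and there are at most $d$ nonzero coefficients. A direct computation using $\Lambda(\ketbra{i}{j})=\frac{1}{d-1}(\delta_{ij}\mathds{1}-\ketbra{i}{j})$ then gives
\[
(\Lambda\otimes I)[\ketbra{\psi}{\psi}]=\frac{1}{d-1}\left(\mathds{1}_A\otimes\sigma_B-\ketbra{\psi}{\psi}\right),
\]
where $\sigma_B=\Tr_A\ketbra{\psi}{\psi}=\sum_i c_i^2\ketbra{i}{i}$ is the (diagonal) reduced state on $B$.

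Next I would exploit the block structure of this operator in the product Schmidt basis $\{\ket{ij}\}$. The off-diagonal vectors $\ket{ij}$ with $i\neq j$ are eigenvectors with eigenvalue $\frac{c_j^2}{d-1}\geq 0$, so the only negative contribution can come from the diagonal subspace $\mathrm{Span}\{\ket{ii}\}$. On that subspace the operator equals $\frac{1}{d-1}(D-vv^{T})$, with $D=\mathrm{diag}(c_i^2)$ and $v=(c_i)_i$ a real unit vector. The entire problem thus reduces to bounding $\lambda_{\min}(D-vv^{T})$ from below.

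The crucial and only nontrivial step is to establish the \emph{tight} bound $\lambda_{\min}(D-vv^{T})\geq -\frac{d-1}{d}$; the naive Cauchy--Schwarz estimate $\langle w|(D-vv^T)|w\rangle\geq -\|v\|^2=-1$ is too weak. Instead I would observe that $D-vv^{T}\geq -\frac{d-1}{d}\mathds{1}$ is equivalent, after conjugating by $(D+\frac{d-1}{d}\mathds{1})^{-1/2}$ and noting that the resulting rank-one positive operator has largest eigenvalue $v^{T}(D+\frac{d-1}{d}\mathds{1})^{-1}v$, to the scalar inequality
\[
\sum_i\frac{c_i^2}{c_i^2+\frac{d-1}{d}}\leq 1 .
\]
Since $t\mapsto \frac{t}{t+\frac{d-1}{d}}$ is concave, Jensen's inequality applied to the $d$ arguments $c_i^2$ (padding with zeros if fewer than $d$ are nonzero) yields $\sum_i\frac{c_i^2}{c_i^2+\frac{d-1}{d}}\leq d\cdot\frac{1/d}{1/d+(d-1)/d}=1$, with equality precisely when all $c_i^2=1/d$.

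Putting the pieces together, for every pure input $\text{EV}_{\min}\big((\Lambda\otimes I)[\ketbra{\psi}{\psi}]\big)=\frac{1}{d-1}\lambda_{\min}(D-vv^{T})\geq -\frac1d$, and the maximally entangled state of $\mathds{C}^d\otimes\mathds{C}^d$ saturates this, giving $\mu(\Lambda)=\frac1d$. The main obstacle is exactly the tight eigenvalue bound above: the fact that there are at most $d$ Schmidt terms is what makes the concavity estimate equal \emph{exactly} $1$, and hence what pins the optimal value at $-\frac{d-1}{d}$ rather than at a strictly smaller number.
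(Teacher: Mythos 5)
Your proof is correct, but it follows a genuinely different route from the paper's. The paper proves the upper bound $\mu(\Lambda)\leq\frac1d$ in one line by identifying $\Lambda\otimes I+\frac1d\mathds{1}\Tr$ with a member of the family of bipartite depolarizing maps $\Phi[\frac{d}{d-1},0,-\frac{1}{n-1}]$ and invoking the positivity characterisation of Theorem 3 in \cite{LH}; only the saturation by the maximally entangled state is computed explicitly. You instead give a self-contained argument: Schmidt-decompose the input, observe that $(\Lambda\otimes I)[\ketbra{\psi}{\psi}]=\frac{1}{d-1}\left(\mathds{1}_A\otimes\sigma_B-\ketbra{\psi}{\psi}\right)$ acts non-negatively on the off-diagonal vectors $\ket{ij}$, $i\neq j$, reduce to the diagonal block $\frac{1}{d-1}(D-vv^{T})$, convert the eigenvalue bound into the scalar inequality $\sum_i c_i^2/\bigl(c_i^2+\tfrac{d-1}{d}\bigr)\leq 1$ via the rank-one-update criterion $Q-vv^{T}\geq 0\Leftrightarrow v^{T}Q^{-1}v\leq 1$, and finish with Jensen's inequality, correctly noting that the bound is tight precisely because there are at most $d$ Schmidt terms. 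Amusingly, this is essentially the strategy the paper itself deploys for the \emph{Breuer-Hall} map in Appendix \ref{min eig BH} (same rank-one lemma, same Schmidt-basis block structure, same concavity endgame), so your proof shows that the reduction-map case can be handled by the same elementary machinery without importing the external result of \cite{LH}; what the paper's citation-based route buys in exchange is brevity and a connection to a general positivity characterisation that covers a whole family of maps at once. One cosmetic remark: your claim that $\mathrm{EV}_{\min}$ \emph{equals} $\frac{1}{d-1}\lambda_{\min}(D-vv^{T})$ deserves the one-line justification that $\lambda_{\min}(D-vv^{T})\leq v^{T}(D-vv^{T})v=\sum_i c_i^4-1\leq 0$, so the diagonal block always contains the smallest eigenvalue; but since all other eigenvalues are manifestly non-negative, the inequality $\mathrm{EV}_{\min}\geq-\frac1d$ that you actually need holds regardless.
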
 

\begin{proof}
On the one hand, the maximally entangled state $\ket{\varepsilon}= \sum_i \frac{1}{\sqrt{d}} \ket{ii}$ yields $\braket{\varepsilon|(\Lambda\otimes I)[\ket{\varepsilon}\!\!\bra{\varepsilon}]|\varepsilon}=-\frac{1}{d}$, showing that \(\mu(\Lambda) \geq \frac{1}{d}\). In fact,
\begin{align}
\begin{split}
\Lambda \otimes I [\ket{\varepsilon} \bra{\varepsilon}]\, &=\, \frac{1}{d(d-1)}\, \sum_{ij} ( \mathds{1} \Tr \otimes I - I ) [\ket{ii} \bra{jj}]\,\\
&=\, \frac{1}{d(d-1)}\, \sum_{ij} \left( \delta_{ij} \mathds{1} \otimes \ket{i} \bra{j} - \ket{ii} \bra{jj}\right)\, \\
&=\, \frac{1}{d-1} \left(\frac{ \mathds{1}}{d} - \ket{\varepsilon} \bra{\varepsilon}\right),
\end{split}
\end{align}
and so
\begin{equation}
\bra{\varepsilon} \Lambda \otimes I [\ket{\varepsilon} \bra{\varepsilon}] \ket{\varepsilon} = \frac{1}{d-1} \left( \frac{1}{d} -1 \right) = -\frac{1}{d}.
\end{equation}
On the other hand, with the notation of eq. (1) in \cite{LH}, we have $\Lambda\otimes I +\frac{1}{d}\mathds{1}\Tr=\frac{1}{d}\Phi[\frac{d}{d-1}, 0, -\frac{1}{n-1}]$, which is a positive map thanks to the characterisation given in Theorem 3 of \cite{LH}. Hence as \(\mu(\Lambda) = \min \{c \mid \Lambda \otimes I + c \mathds{1} \Tr \geq 0\}\), we have \(\mu(\Lambda) \leq \frac{1}{d}\).
\end{proof}

As a consequence we obtain the following.

\begin{proof}[Proof of Theorem \ref{thmreduction}]
The argument goes in complete analogy with the one developed for the proof of Theorem \ref{thmphiT} (see Appendix \ref{min eig T}). The fact that the coefficient $c$ is twice the minimal output eigenvalue comes from the fact that we are dealing with tripartite systems.
\end{proof}

\subsection{Breuer-Hall map}  \label{min eig BH}

The Breuer-Hall map is defined in any even dimension $d\geq 4$ as
\begin{equation}
\mathcal{B}\, =\, \frac{\mathds{1}\Tr - I - \mathcal{V}T}{d-2}\, ,
\end{equation}
where $\mathcal{V}(\cdot)\equiv V(\cdot)V^{\dag}$ with $V$ a unitary, skew-symmetric matrix, $V^{T}=-V$. In order to find $\mu(\mathcal{B})$ we have to work a bit harder.

\begin{lemma}
The Breuer-Hall map $\mathcal{B}$ satisfies $\mu(\mathcal{B})=\frac{1}{d}$.
\end{lemma}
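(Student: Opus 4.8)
The plan is to determine $\mu(\mathcal{B})$ by matching a lower and an upper bound, exactly mirroring the reduction-map argument. I would first invoke the properties of $\mu$ collected in Appendix \ref{min eig pos} to restrict the minimisation defining $\mu(\mathcal{B})$ to pure inputs $\ket{\psi}$, and then note that any $\ket{\psi}\in\mathds{C}^d\otimes\mathds{C}^n$ has Schmidt rank at most $d$, so it is enough to take $n=d$.

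For the lower bound $\mu(\mathcal{B})\ge\frac{1}{d}$ I would evaluate the map on the maximally entangled state $\ket{\varepsilon}=\frac{1}{\sqrt{d}}\sum_i\ket{ii}$, just as in Lemma \ref{mu T}. The reduction-type part contributes $(\I\Tr-I)\otimes I[\ketbra{\varepsilon}{\varepsilon}]=\frac{1}{d}\I-\ketbra{\varepsilon}{\varepsilon}$, already computed in Appendix \ref{min eig R}. The only genuinely new ingredient is the skew-symmetric term. Since $(T\otimes I)[\ketbra{\varepsilon}{\varepsilon}]=\frac1d\mathbb{F}$, with $\mathbb{F}$ the swap operator and $\mathbb{F}\ket{\varepsilon}=\ket{\varepsilon}$, one finds $(\mathcal{V}T\otimes I)[\ketbra{\varepsilon}{\varepsilon}]\ket{\varepsilon}=\frac1d (V\otimes V^{\dagger})\ket{\varepsilon}$. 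The key observation is that a skew-symmetric unitary satisfies $\bar V=-V^{\dagger}$, whence $(V\otimes V^{\dagger})\ket{\varepsilon}=(V\bar V\otimes I)\ket{\varepsilon}=-\ket{\varepsilon}$ using $(\I\otimes B)\ket{\varepsilon}=(B^{T}\otimes \I)\ket{\varepsilon}$. Collecting the three contributions gives
\[
\bra{\varepsilon}\,\mathcal{B}\otimes I[\ketbra{\varepsilon}{\varepsilon}]\,\ket{\varepsilon}=\tfrac{1}{d-2}\de{\tfrac1d-1+\tfrac1d}=-\tfrac1d ,
\]
so $\text{EV}_{\min}\le-\frac1d$ and therefore $\mu(\mathcal{B})\ge\frac1d$.

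The harder direction, which I expect to be the main obstacle, is the upper bound $\mu(\mathcal{B})\le\frac1d$: one must show $(\mathcal{B}\otimes I)[\ketbra{\psi}{\psi}]\ge-\frac1d\I$ for every pure $\ket{\psi}$, equivalently that $\mathcal{B}\otimes I+\frac1d\I\Tr$ is a positive map. Unlike the transpose case, where $(T\otimes I)[\ketbra{\psi}{\psi}]$ decomposes into the two-dimensional invariant blocks $W_{ij}$, the conjugation $\mathcal{V}$ couples these blocks and spoils the clean decomposition; moreover one cannot bound the three summands $\I\otimes\rho_B$, $-\ketbra{\psi}{\psi}$ and $-(\mathcal{V}T\otimes I)[\ketbra{\psi}{\psi}]$ separately, because the positivity hinges on their mutual cancellation.

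To overcome this I would exploit that $\mu$ is unchanged under pre- and post-composition with unitary conjugations (the fourth property of Appendix \ref{min eig pos} together with its converse, applied to $\mathcal{V}^{-1}$), which lets me replace $V$ by any skew-symmetric unitary that is local-unitarily equivalent to it. Since every skew-symmetric unitary can be brought to the canonical symplectic form $\bigoplus\left(\begin{smallmatrix}0&1\\-1&0\end{smallmatrix}\right)$, I would work with $V$ in this normal form, write $\ket{\psi}$ in its Schmidt basis, and track how $\mathcal{V}T\otimes I$ acts on the spans $\mathrm{Span}\{\ket{ij},\ket{ji}\}$ together with the pairing of indices induced by $V$. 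This should reduce the spectrum of $(\mathcal{B}\otimes I)[\ketbra{\psi}{\psi}]$ to a family of small (at most four-dimensional) blocks whose eigenvalues I can bound below by $-\frac1d$, the extremal case being the flat Schmidt spectrum, i.e. the maximally entangled state that already saturated the lower bound. In parallel I would attempt to recognise $\mathcal{B}\otimes I+\frac1d\I\Tr$ within a Lami–Huber-type positivity criterion, as was done for the reduction map; however, the extra $\mathcal{V}T$ term most likely takes the shifted map outside the depolarizing family, so I expect the explicit block analysis to be the reliable route.
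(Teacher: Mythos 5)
Your lower bound is correct, and is in fact tidier than the paper's own computation: the paper evaluates $\braket{\varepsilon|(\mathcal{B}\otimes I)[\ketbra{\varepsilon}{\varepsilon}]|\varepsilon}=-\tfrac1d$ by expanding matrix elements and invoking $\bra{j}V\ket{i}=-\bra{i}V\ket{j}$ term by term, whereas your swap-operator identity $(V\otimes V^{\dagger})\ket{\varepsilon}=(V\bar{V}\otimes\I)\ket{\varepsilon}=-\ket{\varepsilon}$ (valid since $\bar{V}=-V^{\dagger}$ for a skew-symmetric unitary) gives the same value in one line. The restriction to pure inputs and the observation that $\mu$ is invariant under pre- and post-composition with unitary conjugations, hence that $V$ may be replaced by its Youla canonical form, are also both legitimate.

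The gap is in the upper bound, which is the real content of the lemma, and it lies exactly where you placed your hopes: the block decomposition does not exist. Putting $V=\bigoplus\left(\begin{smallmatrix}0&1\\-1&0\end{smallmatrix}\right)$ fixes a preferred basis on system $A$, while the Schmidt basis of a generic $\ket{\psi}$ is an arbitrary rotation of that basis; you cannot have both normalizations at once, and in the Schmidt basis $V$ is again a \emph{generic} skew-symmetric unitary. Concretely, the troublesome term $V_{A}\ketbra{\psi}{\psi}^{T_{A}}V_{A}^{\dagger}$ has matrix elements proportional to $c_{i}c_{j}\braket{k|V|j^*}\overline{\braket{l|V|i^*}}$ between $\ket{ki}$ and $\ket{lj}$, and skew-symmetry only forces the entries $\braket{i|V|i^*}$ to vanish; all other entries are unconstrained, so this term couples the two-dimensional subspaces $W_{ij}$ to one another and no family of bounded-size invariant blocks survives. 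The paper's proof is built precisely to avoid diagonalizing this operator: it rewrites the claim as $Q-\ketbra{\psi}{\psi}\geq 0$ with $Q=\I\otimes\rho_{\psi}-V_{A}\ketbra{\psi}{\psi}^{T_{A}}V_{A}^{\dagger}+\tfrac{d-2}{d}\I$, establishes $Q>0$ from the complete positivity of $\I\Tr-\mathcal{V}T$ (its Choi operator is a rotated antisymmetric projector), and uses the criterion that $Q-\ketbra{\psi}{\psi}\geq 0$ iff $\braket{\psi|Q^{-1}|\psi}\leq 1$. Since $Q=V_{A}RV_{A}^{\dagger}$ with $R=\I\otimes\rho_{\psi}-\ketbra{\psi}{\psi}^{T_{A}}+\tfrac{d-2}{d}\I$ free of $V$, it is $R$ — which \emph{does} enjoy the two-dimensional block structure of the transposition lemma — that gets inverted explicitly, and all the $V$-dependence is confined to the scalar $\braket{\psi|V_{A}R^{-1}V_{A}^{\dagger}|\psi}$. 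That scalar is then bounded by $1$ using the skew-symmetry identities, Birkhoff's theorem applied to the doubly stochastic matrix $S_{ij}=|\braket{i|V|j^*}|^{2}$, and the concavity of $p\mapsto\sum_{i}p_{i}/(2p_{i}+1-\tfrac2d)$. Without some substitute for this mechanism — and, as you yourself suspect, the Lami--Huber depolarizing criterion does not cover the shifted map — your proposal does not establish $\mu(\mathcal{B})\leq\tfrac1d$.
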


\begin{proof}
Using the maximally entangled state as an ansatz gives once again $\braket{\varepsilon|(\Lambda\otimes I)[\ket{\varepsilon}\!\!\bra{\varepsilon}]|\varepsilon}=-\frac{1}{d}$. In fact,
\begin{equation}
\begin{split}
\Lambda \otimes I [\ket{\varepsilon} \bra{\varepsilon}]\, &=\, \frac{1}{d(d-2)} \, \sum_{ij} ( \mathds{1} \Tr \otimes I - I - \mathcal{V} T \otimes I)[ \ket{ii} \bra{jj}]\\
&=\, \frac{1}{d(d-2)} \, \sum_{ij} \left(\delta_{ij} \mathds{1} \otimes \ket{i} \bra{j} - \ket{ii} \bra{jj} - V\ket{j} \bra{i} V^\dagger \otimes \ket{i}\bra{j}\right)\\
&=\, \frac{1}{d-2}\, \bigg(\frac{ \mathds{1}}{d} - \ket{\varepsilon} \bra{\varepsilon} - \frac{1}{d} \sum_{ij} V\ket{j} \bra{i} V^\dagger \otimes \ket{i} \bra{j}\bigg)\, .
\end{split}
\end{equation}
Using \(V^T = -V\), which becomes  \(\bra{j} V \ket{i} = - \bra{i} V \ket{j}\) at the level of matrix elements, we obtain
\begin{equation}
\begin{split}
\bra{\varepsilon} \Lambda \otimes I (\ket{\varepsilon} \bra{\varepsilon}) \ket{\varepsilon} &= \frac{1}{d-2}\, \bigg( \frac{1}{d} -1 - \frac{1}{d^2} \sum_{ijkl} \bra{kk} ( V \ket{j} \bra{i} V^\dagger \otimes \ket{i} \bra{j}) \ket{ll}\bigg)\\
&= \frac{1}{d-2}\, \bigg( \frac{1-d}{d} - \frac{1}{d^2} \sum_{ijkl} \bra{k} V \ket{j} \delta_{ki} \bra{i} V^\dagger \ket{l} \delta_{jl}) \bigg)\\
&= \frac{1}{d-2}\, \bigg(\frac{1-d}{d} - \frac{1}{d^2} \sum_{ij} \bra{i} V \ket{j} \bra{i} V^\dagger \ket{j}\bigg)\\
&= \frac{1}{d-2}\, \bigg(\frac{1-d}{d} + \frac{1}{d^2} \sum_{ij} \bra{j} V \ket{i} \bra{i} V^\dagger \ket{j}\bigg)\\
&= \frac{1}{d-2}\, \bigg(\frac{1-d}{d} + \frac{1}{d^2} \sum_{j} \bra{j} \underbrace{V V^\dagger}_{= \mathds{1}} \ket{j}\bigg)\\
&= \frac{1}{d-2}\, \bigg( \frac{1-d}{d}+\frac{1}{d}\bigg)=-\frac{1}{d}.
\end{split}
\end{equation}
Now, we have to show that $\frac{1}{d-2}(\mathds{1}\Tr - I - \mathcal{V} T)\otimes I + \frac{1}{d}\mathds{1}\Tr$ is a positive map. That is, taking an arbitrary input $\ket{\psi}$, we must prove that
\begin{equation}
\mathds{1}\otimes \rho_{\psi} - \ket{\psi}\!\!\bra{\psi} - V_{A} \ket{\psi}\!\!\bra{\psi}^{T_{A}} V_{A}^{\dag} + \frac{d-2}{d}\,\mathds{1}\, \geq\, 0\, ,
\end{equation}
where $\rho_{\psi}=\Tr_{A}\ket{\psi}\!\!\bra{\psi}$ is the reduced state of $\psi$. We need a couple of preliminary facts:
\begin{itemize}
\item if $Q>0$ then $Q-\ket{\psi}\!\!\bra{\psi}\geq 0$ iff $\braket{\psi|Q^{-1}|\psi}\leq 1$;
\item $\mathds{1}\Tr - \mathcal{V}T$ is a completely positive map.
\end{itemize}
To show the first claim, it is enough to apply the hermitian invertible operator $Q^{-1/2}$ on both sides of the inequality, and to observe that for an unnormalised vector $\ket{\alpha}$ we have that $\mathds{1}-\ketbra{\alpha}{\alpha}$ iff $\braket{\alpha|\alpha}\leq 1$. The second claim follows from the identity $\I \Tr - \mathcal{V}T=\mathcal{V}(\I\Tr - T)$  and from the fact that the Choi state of $\I \Tr - T$ is proportional to the projector onto the antisymmetric subspace (i.e. it is an extremal Werner state).

As an easy consequence of the second claim, we deduce that
\begin{equation}
\mathds{1}\otimes \rho_{\psi} - V_{A} \ket{\psi}\!\!\bra{\psi}^{T_{A}} V_{A}^{\dag}\, =\, \left( (\mathds{1}\Tr - \mathcal{V}T)\otimes\OI\right)[\ketbra{\psi}{\psi}]\, \geq\, 0\, ,
\end{equation}
which in turn shows that $Q=\mathds{1}\otimes \rho_{\psi} - V_{A} \ket{\psi}\!\!\bra{\psi}^{T_{A}} V_{A}^{\dag} + \frac{d-2}{d}\I>0$.

Combining this with the first of the above claims, we have left to show that
\begin{equation}
\braket{\psi| Q^{-1}|\psi}\, =\, \bra{\psi} \left(\mathds{1}\otimes \rho_{\psi} - V_{A} \ket{\psi}\!\!\bra{\psi}^{T_{A}} V_{A}^{\dag} + \frac{d-2}{d}\,\mathds{1}\right)^{-1}\ket{\psi}\, \leq\, 1\, .
\end{equation}
First of all, rewrite the left-hand side by taking out of parenthesis the unitaries $V_{A}$. Now, we have to compute 
\begin{equation}
\left(\mathds{1}\otimes \rho_{\psi} - \ket{\psi}\!\!\bra{\psi}^{T_{A}} + \frac{d-2}{d}\,\mathds{1}\right)^{-1}\, \equiv\, R^{-1}
\end{equation}
and show that
\begin{equation}
\bra{\psi} (V_A R V_A^\dagger)^{-1} \ket{\psi} = \bra{\psi} (V_A^\dagger)^{-1} R^{-1} V_A^{-1} \ket{\psi}=\bra{\psi} V_A R^{-1} V_A^\dagger \ket{\psi} \leq 1.
\end{equation}
To express \(R^{-1}\), start by writing the Schmidt decomposition $\ket{\psi}=\sum_{i} c_i\ket{i i}$. We then note that $R$ preserves the subspaces $W_{i}\equiv\text{Span}\{\ket{i^{*}i}\}$ ($i=1,\ldots,d$) and $W_{ij}\equiv\text{Span}\{\ket{i^{*}j},\ket{j^{*}i}\}$ (with $1\leq i<j\leq d$), where the complex conjugation is taken in the same basis as the partial transposition. In fact, all the three addends appearing in the expression of $R$ preserve those subspaces, as follows easily from the argument in the proof of Lemma \ref{mu T}.

To describe $R$, we hence only need to describe its restriction to these later subspaces. Leveraging the results obtained in the proof of Lemma \ref{mu T}, we have
\begin{equation}
\begin{split}
R\big|_{W_i}&= 1-\frac{2}{d}\\
R\big|_{W_{ij}} &= \begin{pmatrix} c_{j}^2+1-\frac2d & -c_{i}c_{j} \\[1ex] -c_{i}c_{j} & c_{i}^2+1-\frac2d \end{pmatrix},
\end{split}
\end{equation}
which leads to
\begin{equation}
\begin{split}
R\big|_{W_i}^{-1}&= \frac{1}{1-\frac{2}{d}}\\
R\big|_{W_{ij}}^{-1}& = \frac{1}{\left(1-\frac2d\right)\left(c_{i}^2+c_{j}^2+1-\frac2d\right)}\, \begin{pmatrix} c_{i}^2+1-\frac2d & c_{i}c_{j} \\[1ex] c_{i}c_{j} & c_{j}^2+1-\frac2d \end{pmatrix}.
\end{split}
\end{equation}
Putting all together, we see that
\begin{align*}
R^{-1}\, &=\, \frac{1}{1-\frac2d}\, \sum_{i} \ket{i^*i}\!\!\bra{i^*i}\, +\, \frac{1}{1-\frac2d}\, \sum_{i<j}\, \frac{1}{c^2_{i}+c^2_{j}+1-\frac2d} \, \Bigg( \left(c^2_{i}+1-\frac2d\right) \ket{i^*j}\!\!\bra{i^*j}\, +\\
&+\, c_{i}c_{j}\, \left(\ket{i^*j}\!\!\bra{j^* i}\, +\, \ket{j^*i}\!\!\bra{i^* j}\right)\, +\, \left( c_{j}^2+1-\frac2d\right) \ket{j^*i}\!\!\bra{j^* i} \Bigg)\, .
\end{align*}
Now that we have an expression for \(R^{-1}\), $\braket{\psi|V_{A}R^{-1}V_{A}^{\dag}|\psi}\leq 1$ is left to prove, the matrix elements of which can be easily computed with the help of the identities
\begin{align}
&\braket{j|V|i^*}_A\braket{i^*|V^{\dag}|j}_A\, =\, \braket{i|V|j^*}_A\braket{j^*|V^{\dag}|i}_A\, =\, \big|\braket{i|V|j^*}_A\big|^{2}\, , \\
&\braket{j|V|i^*}_A\braket{j^*|V^{\dag}|i}_A\, =\, \braket{i|V|j^*}_A\braket{i^*|V^{\dag}|j}_A\, =\, -\,\big|\braket{i|V|j^*}_A\big|^{2}\, , \\
&\braket{{i}|V|{i}^{*}}_A\, =\, \braket{{i}^{*}|V|{i}}_A\, =\, 0\, ,
\end{align}
all consequences of the fundamental fact that $V$ is skew-symmetric, i.e. $V^{T}=-V$. Using those identities we get the following.
\begin{equation}
\begin{split}
\braket{\psi|V_{A}R^{-1}V_{A}^{\dag}|\psi}& = \sum_{kl} c_k c_l \bra{k k} V_A R^{-1} V_A^\dagger \ket{l l}\\
&= \sum_{kl} c_k c_l \left[ \frac{1}{1-\frac{2}{d}} \sum_i \underbrace{\bra{k k} V_A \ket{i^* i}}_{=\delta_{ki} \bra{i}V\ket{i^*}_A} \underbrace{ \bra{i^* i}V_A^{\dagger} \ket{l l}}_{= \delta_{il} \bra{i^*} V\ket{i}_A}+ \right.\\
&\qquad +\left. \frac{1}{1- \frac{2}{d}} \sum_{i<j} \frac{1}{c^2_i+c^2_j+1-\frac{2}{d}}\left((c^2_i+1-\frac{2}{d}) \underbrace{ \bra{k k} V_A \ket{i^* j}}_{= \delta_{kj} \bra{j} V\ket{i^*}_A} \underbrace{\bra{i^* j} V_A^\dagger \ket{l l}}_{= \delta_{jl} \bra{i^*} V\ket{j}_A}+\right. \right.\\
&\qquad \qquad \qquad +\left.\left. c_i c_j \underbrace{\bra{k k} V_A \ket{i^* j}}_{= \delta_{kj} \bra{j} V\ket{i^*}_A} \underbrace{ \bra{j^* i}V_A^\dagger \ket{l l}}_{= \delta_{il} \bra{j^*} V\ket{i}_A}+c_i c_j \underbrace{\bra{k k} V_A \ket{j^* i}}_{= \delta_{ki} \bra{i} V\ket{j^*}_A} \underbrace{\bra{i^* j}V_A^\dagger \ket{l l}}_{= \delta_{jl} \bra{i^*} V\ket{j}_A} \right. \right.\\
&\left. \left. \qquad \qquad \qquad (c^2_j+1-\frac{2}{d})  \underbrace{\bra{k k} V_A \ket{j^* i}}_{= \delta_{ki} \bra{i} V\ket{j^*}_A} \underbrace{\bra{j^* i}V_A^\dagger \ket{l l}}_{= \delta_{il} \bra{j^*} V\ket{i}_A}\right)\right] \\
&= \frac{1}{1-\frac{2}{d}} \sum_{i<j} \frac{1}{c^2_i+c^2_j +1-\frac{2}{d}} \left( \big(c^2_i+1-\frac{2}{d}\big) c^2_j \lvert \bra{i} V \ket{j^*}_A\rvert^2 - 2 c^2_i c^2_j \lvert \bra{i} V \ket{j^*}_A\rvert^2 + \right. \\
&\qquad \left.+ \big(c^2_j+1-\frac{2}{d}\big) c^2_i \lvert \bra{i} V \ket{j^*}_A\rvert^2 \right)\\
 &=\sum_{i<j}\, \frac{c^2_{i}+c^2_{j}}{c^2_{i}+c^2_{j}+1-\frac2n}\, \big|\braket{i|V|j^*}_A\big|^{2}\, =\, \frac12\, \sum_{i, j}\, \frac{c^2_{i}+c^2_{j}}{c^2_{i}+c^2_{j}+1-\frac2n}\, \big|\braket{i|V|j^*}_A\big|^{2}\, ,
\end{split}
\end{equation}
where the last step is possible thanks to $\big|\braket{i|V|j^*}_A\big|^{2}= \bra{j} V \ket{i^*}_A \bra{i^*} V^\dagger \ket{j}_A=\big|\braket{j|V|i^*}_A\big|^{2}$, which is moreover zero if $i=j$. We have to prove that the right-hand side of the above equation is always upper bounded by $1$, for all orthonormal basis $\{\ket{i}_A\}_{i}$ and all probability distributions $\{c^2_{i}\}_{i}$. Observe that the matrix $S$ defined via $S_{ij}=\big|\braket{i|V|j^*}_A\big|^{2}$ is clearly doubly stochastic, as \(\sum_j S_{ij}= \sum_j \lvert \bra{i} V \ket{j^*}_A \rvert^2 = \sum_j \bra{i} V \ket{j^*}_A\bra{j^*} V^\dagger \ket{i}_A= \bra{i} V V^\dagger\ket{i}_A=1\) and similarly \(\sum_i S_{ij}=1\). Hence by Birkhoff's Theorem it is the convex combination of permutation matrices, and by linearity we can limit ourselves to show that
\begin{equation} \label{equ:tomaxfunc}
\frac12\, \sum_{i}\, \frac{c^2_{i}+c^2_{\sigma(i)}}{c^2_{i}+c^2_{\sigma(i)}+1-\frac2d}\, \leq\, 1\, .
\end{equation}
for all permutations $\sigma$. Defining $p_{i}\equiv \frac{c^2_{i}+c^2_{\sigma(i)}}{2}$, we rephrase Equation \ref{equ:tomaxfunc} as
\begin{equation}\label{equ:fp}
\sum_{i}\, \frac{p_{i}}{2p_{i}+1-\frac2d}\, \leq\, 1\, ,
\end{equation}
to be proven for all probability distributions $p$. The above function of $p$ is easily seen to be concave and achieving its maximum for $p_{i}\equiv \frac1d$, which yields the required upper bound.\end{proof}

\section{Proof of Proposition \ref{prop-main}}\label{Aprojection}
In the following we prove Proposition \ref{prop-main} of the main text. To do this, we first prove the following

\begin{proposition}
\label{propGHZmat}
For all \(n \in \mathbb{N},\; n \geq 2\) we have for the mixture of local unitaries \(\mathcal{X}_{n}^d\) that 
\begin{align}\label{eq.proj}
\mathcal{X}_{n}^d ( (\ket{+} \bra{+})^{\otimes n})=\text{GHZ}_{n,\text{cyclic}}^{\;d} ,
\end{align}
where \(\ket{+} \coloneqq \ket{0}+ \ket{1} + \dots+ \ket{d-1}\) and  \((\ket{+} \bra{+})^{\otimes n}\) denotes the matrix whose elements are all equal to one.
\end{proposition}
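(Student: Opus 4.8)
The plan is to follow the action of $\mathcal{X}_{n}^d$ on individual matrix units and then reassemble by linearity. Writing multi-indices $\mathbf a = (a_1,\dots,a_n)$ and $\mathbf b=(b_1,\dots,b_n)$ in $\{0,\dots,d-1\}^n$, I would first expand the all-ones matrix as $(\ket{+}\bra{+})^{\otimes n} = \sum_{\mathbf a, \mathbf b} \ketbra{\mathbf a}{\mathbf b}$, so that it suffices to evaluate each $h_{n,i}^d$, and hence $\mathcal{X}_{n}^d$, on a single $\ketbra{\mathbf a}{\mathbf b}$.

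The key step is the character-sum computation for one factor. Setting $\omega = e^{2\pi i /d}$, the clock operators act diagonally as $Z_k^1 (Z_k^i)^\dagger \ket{\mathbf a} = \omega^{k(a_1 - a_i)}\ket{\mathbf a}$ and $\bra{\mathbf b}(Z_k^1)^\dagger Z_k^i = \omega^{k(b_i - b_1)}\bra{\mathbf b}$, so that $h_{n,i}^d[\ketbra{\mathbf a}{\mathbf b}] = \de{\tfrac 1d \sum_{k=0}^{d-1} \omega^{k[(a_1 - a_i)-(b_1 - b_i)]}}\ketbra{\mathbf a}{\mathbf b}$. The orthogonality relation $\tfrac 1d \sum_{k=0}^{d-1}\omega^{km} = 1$ if $m \equiv 0 \pmod d$ and $0$ otherwise then shows that $h_{n,i}^d$ acts as a keep-or-kill projection in the matrix-unit basis, retaining $\ketbra{\mathbf a}{\mathbf b}$ exactly when $a_1 - a_i \equiv b_1 - b_i \pmod d$. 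Since these projections are diagonal they commute, so the order in $\mathcal{X}_{n}^d = \circ_{j=2}^n h_{n,j}^d$ is irrelevant and $\mathcal{X}_{n}^d[\ketbra{\mathbf a}{\mathbf b}] = \ketbra{\mathbf a}{\mathbf b}$ precisely when $a_i - b_i \equiv a_1 - b_1 \pmod d$ for all $i$, and vanishes otherwise.

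It then remains to match this surviving set with the support of $\text{GHZ}_{n,\text{cyclic}}^{d}$. Expanding the right-hand side, each summand $X_d^{\mathbf C}\ketbra{GHZ_n^d}{GHZ_n^d}(X_d^{\mathbf C})^\dagger$ contributes the units $\ketbra{j, j+C_2, \dots, j+C_n}{j', j'+C_2, \dots, j'+C_n}$ (indices mod $d$) as $j,j'$ range over $\{0,\dots,d-1\}$. I would then exhibit the bijection $(\mathbf C, j, j') \mapsto (\mathbf a, \mathbf b)$ given by $j = a_1$, $j' = b_1$, $C_i = a_i - a_1$: it is defined exactly on the units obeying $a_i - a_1 \equiv b_i - b_1$, i.e.\ exactly the survival condition found above (and since $\mathbf C$ runs over all residues, the sign convention of the shift $X_d$ is immaterial), and it is invertible there. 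Hence both operators are supported on the same matrix units, each appearing once on the right, and a final comparison of the single overall scalar (the normalisation of $\ket{GHZ_n^d}$ against the unit coefficient produced by the projections) yields the claimed identity.

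The hard part is purely the bookkeeping of this last step: setting up the index bijection so that the cyclic-shift structure of $X_d^{\mathbf C}$ reproduces the linear condition $a_i - b_i \equiv a_1 - b_1$, and reconciling the normalisation constant. Everything preceding it reduces to the one-line root-of-unity identity applied $n-1$ times.
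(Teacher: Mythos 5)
Your proof is correct, and it takes a genuinely different (and cleaner) route than the paper's primary argument. The paper's proof first treats $n=2$ by displaying the full matrix of phases produced by $Z_k\otimes Z_k^{\dagger}$ and invoking the root-of-unity sum entrywise, and then extends to $n>2$ by conjugating with subsystem-swap operators $D_{2k}$ and composing the $h^d_{n,j}$ one at a time. You instead observe that every $h^d_{n,i}$ is simultaneously diagonal in the matrix-unit basis, acting as a keep-or-kill projection with survival condition $a_1-a_i\equiv b_1-b_i \pmod d$; this makes the composition trivial (all the projections commute, so the order in $\mathcal{X}_n^d=\circ_{j=2}^n h^d_{n,j}$ is manifestly irrelevant, a fact the paper's definition silently relies on), handles all $n\geq 2$ uniformly without the $n=2$ base case or the swap trick, and reduces the identification of the two operators to an explicit index bijection $(\mathbf{C},j,j')\leftrightarrow(\mathbf{a},\mathbf{b})$. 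Your argument is in fact an elementary, hands-on version of the paper's \emph{alternative} group-theoretic proof: the matrix units are exactly the weight vectors of the adjoint action of $\mathbb{Z}_d^{n-1}$, and your character sums implement the projection onto the trivial isotypic component without invoking Schur's lemma. What the paper's approaches buy in exchange is, respectively, a concrete picture of how the entries recombine ($n=2$ display) and a template that generalises to other symmetry groups (the Schur-lemma proof); yours buys brevity and transparency.

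One caveat you should make explicit rather than defer to "a final comparison of the single overall scalar": with the main text's \emph{normalised} definition $\ket{GHZ_n^d}=\tfrac{1}{\sqrt d}\sum_i\ket{i\cdots i}$, each surviving matrix unit appears exactly once in $\text{GHZ}^d_{n,\text{cyclic}}$ with coefficient $1/d$, whereas your projections produce coefficient $1$; so the identity as literally stated holds only up to a factor $d$, and is exact precisely when $\ket{GHZ_n^d}$ is read unnormalised, consistently with the unnormalised $\ket{+}$ of the proposition. The paper's own proof has the same ambiguity (it uses the unnormalised GHZ vector throughout), and the discrepancy is harmless for every subsequent use, since only the support of $\mathcal{X}_n^d\bigl((\ket{+}\bra{+})^{\otimes n}\bigr)$ — i.e.\ the subspace being projected onto via the Schur product — matters.
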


\begin{proof}
For \(n=2\) we have:
\begin{align*}
\mathcal{X}^d_{2}=&h_{2,2}^d((\ket{+} \bra{+})^{\otimes 2})\\
=&\frac{1}{d} \sum_{k=0}^{d-1} Z_k^1 (Z_k^2)^{\dagger} \;(\ket{+}\bra{+})^{\otimes 2} \;(Z_k^1)^{\dagger} Z_k^2 \\
 \stackrel{*}{=}&\quad \;( \ket{00} + \ket{11}+\dots + \ket{(d-1)(d-1)}) (\bra{00} + \bra{11} + \dots+\bra{(d-1) (d-1)})\\
 &+( \ket{01} + \ket{12}+\dots+\ket{(d-1) 0}) (\bra{01} + \bra{12} +\dots+ \bra{(d-1) 0})\\
 &+ \dots \\
 &+( \ket{0(d-1)} + \ket{10}+\dots+ \ket{(d-1)(d-2)}) (\bra{0(d-1)} + \bra{10} + \bra{(d-1)(d-2)})\\
 =& \text{GHZ}_{2,\text{cyclic}}^d.
\end{align*}

To see that \(\stackrel{*}{=}\) holds one can, representing \(Z_k \) by \(\text{diag} (1,(k), (2k), \dots, ((d-1)k))\), where for \(l \in \{0, \dots, (d-1)\},\quad (lk)\coloneqq e^{2 \pi i l k/d}\), view \(Z_k \otimes (Z_k)^{\dagger} (\ket{+}\bra{+})^{\otimes n} (Z_k)^{\dagger} \otimes Z_k\) as

\[
\begin{array}{c|ccccc|ccccc|ccc|}
&1& (k)&(2k)& \cdots & ((d-1) k)& (-k)& 1&\cdots & ((d-2) k)& \cdots&((-d+1) k) &\cdots &1\\
\hline

1&1& (k)&(2k)& \cdots & ((d-1) k)& (-k)& 1&\cdots & ((d-2) k)& \cdots&((-d+1) k) &\cdots &1\\
(-k)& & 1& & \cdots & &  & &\cdots & & \cdots& 1&\cdots & \\
(-2k)& & &1 & \cdots & &  & &\cdots & & \cdots&  &\cdots &\\
\vdots& & & & \ddots & &  & &\ddots & & \ddots&  &\ddots &\\
((-d+1) k)& & & & \cdots & 1& 1 & &\cdots & & \cdots&  &\cdots &\\
\hline
(k)& & & & \cdots & 1& 1 & &\cdots & & \cdots&  &\cdots &\\
1&1& (k)&(2k)& \cdots & ((d-1) k)& (-k)& 1&\cdots & ((d-2) k)& \cdots&((-d+1) k) &\cdots &1\\
(-k)& & 1& & \cdots & &  & &\cdots & & \cdots& 1&\cdots & \\
\vdots& & & & \ddots & &  & &\ddots & & \ddots&  &\ddots &\\
((-d+2)k)& & & & \cdots & &  & &\cdots & 1& \cdots&  &\cdots &\\
\hline

\vdots& & & & \ddots & & & &\ddots & & \ddots& &\ddots &\\
\hline
((d-1) k)& & 1& & \cdots & &  & &\cdots & & \cdots& 1&\cdots & \\
((d-2) k)& & &1 & \cdots & &  & &\cdots & & \cdots&  &\cdots &\\

\vdots& & & & \ddots & &  & &\ddots & & \ddots& &\ddots &\\
1&1& (k)&(2k)& \cdots & ((d-1) k)& (-k)& 1&\cdots & ((d-2) k)& \cdots&((-d+1) k) &\cdots &1\\

\end{array}.
\]
The leftmost column (left of the vertical bar) represents \(Z_k \otimes (Z_k)^{\dagger}\) as the latter multiplies each row of \((\ket{+} \bra{+})^{\otimes n}\) by the displayed factor. Similarly, \((Z_k)^{\dagger} \otimes Z_k\) is represented by the topmost row since it multiplies each column of \((\ket{+} \bra{+})^{\otimes n}\) by the displayed factor.

Now, we see that the first row (\(\ket{00}\)) has ones exactly at \(\bra{00}, \bra{11}, \dots, \bra{(d-1) (d-1)}\) such that the 
first row of this matrix can be represented by \(\ket{00} \bra{\text{GHZ}}\). The second row where a 1 appears on the leftmost
column is the \(\ket{11}\) row, and again this row can be represented by \(\ket{11} \bra{\text{GHZ}}\). Hence we see that we can
represent the matrix where all the ones of the topmost row and leftmost column meet by \(\ket{\text{GHZ}} \bra{\text{GHZ}}\). 
Similarly, the matrix representing where all the (k) of the topmost column and (-k) of the leftmost row meet can be written 
as \(X_d^{(1)} \ket{\text{GHZ}} \bra{\text{GHZ}} (X_d^{(1)})^{\dagger}\). In general, for \(l \in \{0, 1, \dots, d-1\}\), the matrix 
representing where all the \((lk)\) of the topmost column and \((- lk)\) of the leftmost row meet can be written
as \(X_d^{(l)} \ket{\text{GHZ}} \bra{\text{GHZ}} (X_d^{(l)})^{\dagger}\). Those matrices represent all the entries 
where \(Z_k \otimes (Z_k)^{\dagger} (\ket{+}\bra{+})^{\otimes n} (Z_k)^{\dagger} \otimes Z_k\) has ones. The other entries 
of this matrix have a phase \((lk)\) for some \(l \in \{0, 1, \dots, d-1\}\). When summing over all the \(k = 0, \dots, d-1\) we have
\[\sum_{k=0}^{d-1} (lk) = \sum_{k=0}^{d-1} e^{2\pi i lk/d} = \frac{1-e^{2\pi i d/d}} {1-e^{2\pi i l/d}}=0;\]
such that we have
\[\sum_{k=0}^{d-1}Z_k \otimes (Z_k)^{\dagger} (\ket{+}\bra{+})^{\otimes n} (Z_k)^{\dagger} \otimes Z_k= \sum_{k=0}^{d-1} \sum_{l=0}^{d-1}X_d^{(l)} \ket{\text{GHZ}} \bra{\text{GHZ}} (X_d^{(l)})^{\dagger}= d \cdot \text{GHZ}_{2, \text{cyclic}}^d.\]
  This proves the Claim for \(n=2\). For \(n > 2\) notice that
\begin{align*} 
h_{n,2}^d((\ket{+} \bra{+})^{\otimes n})=&\frac{1}{d} \sum_{k=0}^{d-1} Z_k^1 (Z_k^2)^{\dagger} \;(\ket{+}\bra{+})^{\otimes n} \;(Z_k^1)^{\dagger} Z_k^2\\
=& \frac{1}{d} \sum_{k=0}^{d-1} \De{Z_k \otimes (Z_k)^{\dagger} \;(\ket{+}\bra{+})^{\otimes 2} \;(Z_k)^{\dagger}\otimes Z_k }\; \otimes (\ket{+} \bra{+})^{\otimes (n-2)}\\
 =& \sum_{l=0}^{d-1}X_d^{(l)} \ket{\text{GHZ}} \bra{\text{GHZ}} (X_d^{(l)})^{\dagger} \otimes \sum_{\substack{i_3, \dots, i_n=0\\ j_3, \dots, j_n =0}}^{d-1} \ket{i_3 \dots i_n} \bra{j_3 \dots j_n} \\
  =&\sum_{\substack{i_3, \dots, i_n=0\\ j_3, \dots, j_n =0}}^{d-1} \sum_{C_2 \in \{0,\dots,d-1\}} X_d^{(C_2,0,\dots,0)} \ket{\text{GHZ}} \otimes  \ket{i_3 \dots i_n} \bra{\text{GHZ}} \otimes  \bra{j_3 \dots j_n} (X_d^{(C_2, 0,\dots,0)})^{\dagger}.
  \end{align*}

Assuming that we are working in the following basis 
of \(\mathbb{C}^d \otimes \dots \otimes \mathbb{C}^d\): \(( e_1^1 \otimes \dots \otimes e_n^1 , e_1^1 \otimes \dots \otimes e_n^2, \dots, e_1^1 \otimes \dots \otimes e_n^d, e_1^1 \otimes \dots \otimes e_{n-1}^2\otimes e_n^1, \dots, e_1^d \otimes \dots \otimes e_n^d) \) we define the basis 
transformation \(D_{2k}\) by
\[ D_{2k}(e_1^{l_1}\otimes e_2^{l_2} \otimes \dots \otimes e_k^{l_k} \otimes \dots \otimes e_n^{l_n}) \coloneqq e_1^{l_1}\otimes e_2^{l_k} \otimes \dots \otimes e_k^{l_2} \otimes \dots \otimes e_n^{l_n},\]
\(l_1, \dots, l_n \in \{1,\dots,d\}.\)
This has the effect of exchanging the subspaces 2 and k. Indeed we have 
\begin{align*}
h_{n,k}^d&= D_{2k} h_{n,2}^d D_{2k}^{-1},\\
X^{(C_2, \dots, C_k, \dots ,C_n)}_d&= D_{2k} X^{(C_k, \dots, C_2, \dots ,C_n)}_d D_{2k}^{-1};
\end{align*}
 and hence
 \begin{align*}
 h_{n,k}^d ((\ket{+} \bra{+})^{\otimes n}) =& D_{2k} h_{n,2}^d D_{2k}^{-1} ((\ket{+} \bra{+})^{\otimes n})\\
 =& D_{2k} h_{n,2}^d ((\ket{+} \bra{+})^{\otimes n}) D_{2k}^{-1} \\
 =& \sum_{\substack{i_3, \dots, i_n=0\\ j_3, \dots, j_n =0}}^{d-1} \sum_{C_2 \in \{0, \dots, d-1\}}  X_d^{(0,\dots,C_2, \dots, 0)} \ket{\text{GHZ}}_{1k} \otimes \ket{i_k, i_3, \dots, i_{k-1}, i_{k+1},\dots, i_n} \\
 &\qquad \qquad  \quad \bra{\text{GHZ}}_{1k} \otimes \bra{i_k, i_3, \dots, i_{k-1}, i_{k+1},\dots, i_n}  ( X_d^{(0,\dots,C_2, \dots, 0)} )^{\dagger},
 \end{align*}
where \(\ket{\text{GHZ}}_{1k} \otimes \ket{i_k, i_3, \dots, i_{k-1}, i_{k+1}, \dots, i_n} \coloneqq \sum_{l=0}^{d-1} \ket{l, i_k,i_3, \dots , i_{k-1}, l, i_{k+1}, \dots, i_n}\).
Therefore, since \(X_d^{(C_2,0,\dots,0)} X_d^{(0,C_3,0,\dots,0)} \dots X_d^{(0,\dots,0, C_n)} = X_d^{\mathbf{C}}\) we find
\begin{align*}
\mathcal{X}_{n}^{d}((\ket{+} \bra{+})^{\otimes n})&= h_{n,2}^d \circ h_{n,3}^d \circ \dots \circ h_{n,n}^d[(\ket{+} \bra{+})^{\otimes n}]\\
&=\sum_{\mathbf{C} \in \{0,\dots,d-1\}^{n-1}} X_d^{\mathbf{C}} \ket{\text{GHZ}} \bra{\text{GHZ}} (X_d^{\mathbf{C}})^{\dagger}\\
&= \text{GHZ}_{n,\text{cyclic}}^d. 
\end{align*}
\end{proof}
With this we can now easily prove Proposition \ref{prop-main} of the main text.

\begin{proof}[Proof of Proposition \ref{prop-main}]
By noting that for any $n$-qudit state \(\rho\) we have \(\mathcal{X}_n^d (\rho) = \mathcal{X}_n^d(\ket{+} \bra{+}^{\otimes n}) \circ_s \rho\), where \(\circ_s\) denotes the Schur product, Proposition \ref{propGHZmat} means exactly that \(\mathcal{X}_n^d\) projects any $n$-qudit state into the subspace spanned by the entries of the \textit{GHZ-like cyclically permuted} states, which is the statement of Proposition \ref{prop-main} of the main text.
\end{proof}

Although this way of proving Proposition \ref{prop-main} gives great insights into how the projection is concretely performed, we would 
like to present an alternative way of proving it that, using a group theoretical approach, puts more emphasis on the symmetries of the GHZ 
cyclically permuted states that are made used of here, i.e. the symmetry of the Choi map. 

\begin{proof}[Alternative Proof of Proposition \ref{prop-main}]
Let $\mathcal{G}$ be a compact group whose associated Haar integral we denote by $\int_\mathcal{G}$. Given a unitary representation $\zeta:\mathcal{G}\rightarrow \mathcal{L}(\mathcal{H})$ of $\mathcal{G}$ on a finite--dimensional complex Hilbert space, an interesting super-operator is given by the expression
\begin{equation}
\mathcal{P}\ \equiv\ \int_\mathcal{G} dg\ \text{Ad}_{\zeta}(g) :\ \mathcal{L}(\mathcal{H})\longrightarrow\mathcal{L}(\mathcal{H})\ , \label{P}
\end{equation}
where $\text{Ad}_{\zeta}(g)$ acts on linear operators $X\in\mathcal{L}(\mathcal{H})$ as  $\left(\text{Ad}_{\zeta}(g)\right)(X)\equiv\zeta(g)X\zeta(g)^\dag$. The super-operator $\mathcal{P}$ given by \eqref{P} is an orthogonal projector with respect to the Hilbert--Schmidt product on $\mathcal{L}(\mathcal{H})$. Moreover, let the irreducible decomposition of $\mathcal{L}(\mathcal{H})$ under the action of $\zeta$ be given by
\begin{align}
\mathcal{H}\ &=\ \bigoplus_\alpha V_\alpha^{\oplus n_\alpha}\ =\ \bigoplus_\alpha V_\alpha\otimes \mathds{C}^{n_\alpha}\ , \label{irr L} \\
\zeta\ &=\ \sum_\alpha n_\alpha \zeta_\alpha\ , \label{irr zeta}
\end{align}
with $\zeta_\alpha$ irreducible for all $\alpha$ and $\zeta_\alpha,\zeta_{\alpha^\prime}$ inequivalent if $\alpha\ne\alpha'$. Then $\mathcal{P}$ acts as
\begin{equation}
\mathcal{P}(\cdot)\ =\ \bigoplus_\alpha\, \frac{\mathds{1}}{d_\alpha}\otimes \Tr_{V_\alpha}[\Pi_\alpha\, (\cdot)\, \Pi_\alpha]\ , \label{eq lemma Schur}
\end{equation}
where $d_\alpha\equiv\dim V_\alpha$, $\Pi_\alpha$ is the projector onto the $\alpha$--th block of the direct sum in \eqref{irr L}, and the partial trace is over the first component of the bipartite Hilbert space $V_\alpha\otimes \mathds{C}^{n_\alpha}$.

Despite the complicated appearance, the above observation is just a slightly more sophisticated application of Schur's Lemma.

In our case, we choose $\mathcal{H}=(\mathds{C}^d)^{\otimes n}$ and $\mathcal{G}=\mathds{Z}_d^{n-1}$, where $\mathds{Z}_d$ is the group of integers with the operation of sum modulo $d$. The Haar integral over $\mathcal{G}$ is simply the normalized sum, i.e. $\int_\mathcal{G}=\frac{1}{d^{n-1}}\sum_{g\in \mathcal{G}}$. For $g=(k_2,\ldots,k_n)\in \mathcal{G}$, we define 
\begin{equation} \zeta(g)\,\equiv\, Z_{k_2+\ldots +k_n}\otimes Z^\dag_{k_2} \otimes\ldots\otimes Z^\dag_{k_n}\, . \end{equation}
Then, it is not difficult to see that $\chi_n^d = \mathcal{P}$ as defined by Eq. \ref{eqmapproj} of section \ref{sec:choi} of the main text and \eqref{P} here.
Now, all we have to do is to decompose $\zeta$ in irreducible representations (irreps), all one-dimensional because $\mathcal{G}$ is abelian. For all $0\leq r, h_2,\ldots,h_n\leq d$, we observe that $\ket{r,r+h_2,\ldots, r+h_n}$ satisfies
\begin{equation}
\zeta(k_2,\ldots,k_n) \ket{r,r+h_2,\ldots, r+h_n} = \omega^{-h_2 k_2 -\ldots -h_n k_n} \ket{r,r+h_2,\ldots, r+h_n}\, ,
\end{equation}
i.e. it is an irreducible subspace for $\zeta$. It turns out that the irreps of $\mathcal{G}$ are all present in $\zeta$ and indexed by $h_2,\ldots, h_n$, and each of them has multiplicity $d$ (internal index $r$). This can be easily seen by verifying that the characters of $\ket{r,r+h_2,\ldots, r+h_n}$ and $\ket{r',r'+h'_2,\ldots, r'+h'_n}$ are orthogonal if $(h_2,\ldots,h_n)\neq (h'_2,\ldots ,h'_n)$, and by remembering that the number of non-isomorphic irreps of an abelian group coincides with its cardinality ($d^{n-1}$ in this case). Finally, the action of $\mathcal{P}$ as specified in \eqref{eq lemma Schur} gives exactly the claim of Proposition \ref{prop-main}.
\end{proof}

\section{Choi Map}\label{Aprooflemma}
We want here to prove Lemma \ref{lemmaODT} and Lemma \ref{lemmaOD} of the main text. We begin by proving Lemma \ref{lemmaOD} that we state
again for ease of read.

\begin{lemma}[Lemma \ref{lemmaOD}]
 \begin{align}
\text{OD}(\phi[\mathcal{X}_{n}^d[\rho]]) = (2^{n-1}-1)\text{OD}[\Lambda_A\otimes I_{\bar{A}}[\mathcal{X}_{n}^d[\rho]]]\;\; \forall \, A,
 \end{align}
 where $\text{OD}[X]=X-\text{Diag}[X]$.
\end{lemma}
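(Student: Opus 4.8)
The plan is to first reduce the claim to a statement about operators supported on the cyclic-GHZ subspace, and then to exploit the fact that the Choi map produces off-diagonal terms only through its $-\rho$ component. By Proposition~\ref{prop-main}, for any $\rho$ the operator $\sigma \coloneqq \mathcal{X}_n^d[\rho]$ lies in $\{\text{GHZ}_{n,\text{cyclic}}^d\}$, so it suffices to prove the identity for an arbitrary $\sigma$ supported on that subspace. Concretely, I would establish the stronger, $A$-independent equality $\text{OD}[\Lambda_A \otimes \OI_{\bar{A}}[\sigma]] = -\,\text{OD}[\sigma]$ for every bipartition $A|\bar{A}$. The Lemma then follows by summing over the $2^{n-1}-1$ bipartitions, since $\phi = \sum_A \Lambda_A \otimes \OI_{\bar{A}}$ gives $\text{OD}[\phi[\sigma]] = -(2^{n-1}-1)\,\text{OD}[\sigma] = (2^{n-1}-1)\,\text{OD}[\Lambda_A \otimes \OI_{\bar{A}}[\sigma]]$, which is precisely the asserted relation.

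The key structural input is a description of which matrix units $\ket{\mathbf{i}}\bra{\mathbf{j}}$ can appear in $\sigma$. Expanding $\text{GHZ}_{n,\text{cyclic}}^d = \tfrac1d \sum_{\mathbf{C}} \sum_{k,k'} \ket{k, k+C_2, \dots, k+C_n}\bra{k', k'+C_2, \dots, k'+C_n}$ (all sums modulo $d$), one reads off that a nonzero entry at position $(\mathbf{i},\mathbf{j})$ forces the difference $i_m - j_m$ to take the \emph{same} value $\delta$ for all $m = 1, \dots, n$. Hence every entry of $\sigma$ is either genuinely diagonal ($\delta = 0$, i.e.\ $\mathbf{i}=\mathbf{j}$) or else satisfies $i_m \neq j_m$ in \emph{every} tensor factor ($\delta \neq 0$). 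The crucial consequence is that the partial diagonal projection $\text{Diag}_A$ onto the $A$-register coincides with the full diagonal projection on such a $\sigma$: an off-diagonal entry differs in all components, in particular in the $A$-block, so it is annihilated by $\text{Diag}_A$ just as by $\text{Diag}$, and therefore $\text{Diag}_A[\sigma] = \text{Diag}[\sigma]$.

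With this in hand the computation is immediate. In $\Lambda_A \otimes \OI_{\bar{A}}$, the terms $2\,\text{Diag}_A$ and $\sum_j (X_d^j)_A\, \text{Diag}_A(\,\cdot\,)(X_d^j)_A^\dagger$ all act on $\sigma$ by first projecting onto $\text{Diag}_A[\sigma] = \text{Diag}[\sigma]$, a genuinely diagonal operator; conjugating a diagonal operator by the (permutation) shift $X_d$ merely permutes its diagonal entries and so keeps it diagonal. None of these terms contribute to $\text{OD}$. The only surviving off-diagonal contribution is the $-\sigma$ term, whence $\text{OD}[\Lambda_A \otimes \OI_{\bar{A}}[\sigma]] = \text{OD}[-\sigma] = -\,\text{OD}[\sigma]$, independently of $A$, exactly the $A$-uniform equality required.

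The main obstacle is the bookkeeping of the second paragraph: correctly identifying the support of $\{\text{GHZ}_{n,\text{cyclic}}^d\}$ and verifying the uniform-shift property $i_m - j_m \equiv \delta$, since everything downstream hinges on $\text{Diag}_A[\sigma]=\text{Diag}[\sigma]$. One must also keep in mind that $\Lambda_A$ denotes the Choi map on the (possibly composite) system $A$ of dimension $d^{|A|}$, so that $\text{Diag}_A$ and the shift are to be understood on that composite register; fortunately the argument is insensitive to this, as it uses only that every term other than $-\rho$ is diagonal-preserving. The final step of counting the $2^{n-1}-1$ bipartitions is then routine.
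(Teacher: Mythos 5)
Your proof is correct and follows essentially the same route as the paper's: both rest on the observation that every nonzero off-diagonal entry of $\mathcal{X}_n^d[\rho]$ differs in \emph{every} tensor factor, so that the $\text{Diag}$-type terms of the Choi map yield only diagonal output and the $-\rho$ term alone survives, giving the $A$-independent identity $\text{OD}[\Lambda_A\otimes \OI_{\bar{A}}[\sigma]]=-\text{OD}[\sigma]$ before summing over the $2^{n-1}-1$ bipartitions. Your single identity simply merges the paper's two intermediate equations (its Eqs.~\eqref{equ:ODpartition} and \eqref{equ:ODcommute}) into one statement, which is a cosmetic rather than substantive difference.
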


\begin{proof}
First of all note that
\begin{align*}
\text{GHZ}_{n,\text{cyclic}}^d &= \sum_{C \in \{0,1,\dots, d-1\}^{n-1}} X_d^C \ket{\text{GHZ}_n^d} \bra{\text{GHZ}_n^d} (X_d^C)^{\dagger}\\
&= \sum_{C \in \{0,1,\dots, d-1\}^{n-1}} \frac{1}{d} \sum_{i,j=0}^{d-1} X_d^C \ket{i\, i \dots i} \bra{j \, j \dots j} (X_d^C)^{\dagger}\\
&= \sum_{C \in \{0,1,\dots, d-1\}^{n-1}} \frac{1}{d} \sum_{i,j=0}^{d-1}  \ket{i \; i+C_2 \dots i+C_n} \bra{j \; j+C_2 \dots j+C_n},
\end{align*}
where the sums \(k+C_l, \quad k=i,j , \; l = 2, \dots, n\) are to be understood as modulo d. For any state \(\rho= (\rho_{\vec{k_1},\vec{k_2}}) \geq 0,\) 
with \(\vec{k_1},\vec{k_2} \in \{0, \dots, d-1\}^n\) we hence have
\begin{align*}
\mathcal{X}_{n}^d[\rho]&= \mathcal{X}_{n}^d[\ket{+} \bra{+}^{\otimes n}] \circ_s \rho\\
&= \text{GHZ}_{n,\text{cyclic}}^d \circ_s \rho\\
&= \sum_{C \in \{0,1,\dots, d-1\}^{n-1}} \frac{1}{d} \sum_{i,j=0}^{d-1} \rho_{i+(0,C), j+(0,C)}  \ket{i \; i+C_2 \dots i+C_n} \bra{j \; j+C_2 \dots j+C_n}.
\end{align*}

Note that trivially \(i \neq j \Leftrightarrow i+C_k \neq j +C_k, \; \forall k \in \{2,\dots,n\}\) such that, the main diagonal excepted, 
all non-vanishing elements of \(\mathcal{X}_{n}^d(\rho)\) are off-diagonal with respect to any partition (\(A \mid \bar{A}\)). 
In particular, this means that for any partition (\(A \mid \bar{A}\))
 \begin{equation}
 \Lambda_A\otimes I_{\bar{A}} [\text{OD}[\mathcal{X}_{n}^d[\rho]]]= - \text{OD}[\mathcal{X}_{n}^d[\rho]];
 \label{equ:ODpartition}
 \end{equation}
 that is the diagonal part of \(\Lambda_A\) acts trivially.  In other words, the above means that \(\text{Diag}_A\) only affects the diagonal
 elements of \(\mathcal{X}_n^d[\rho]\). Furthermore, since the latter becomes diagonal upon applying \(\text{Diag}_A\), it follows
 \begin{equation}
 \Lambda_A \otimes I_{\bar{A}} [ \text{OD} [ \mathcal{X}_n^d [\rho]]] = \text{OD} [\Lambda_A \otimes I_{\bar{A}} [ \mathcal{X}_n^d [\rho]]],
 \label{equ:ODcommute}
 \end{equation}
 such that
 \begin{align*}
 \text{OD}[\phi [\mathcal{X}_n^d[\rho]]] &= \text{OD}\left[ \sum_{B \mid \bar{B}} \Lambda_B \otimes I_{\bar{B}}[\mathcal{X}_n^d[\rho]]\right]\\
 &= \sum_{B \mid \bar{B}} \text{OD} [ \Lambda_B \otimes I_{\bar{B}} [\mathcal{X}_n^d [\rho]]]\\
 &\myequ{\ref{equ:ODcommute}} \sum_{B \mid \bar{B}} \Lambda_B \otimes I_{\bar{B}} [ \text{OD} [\mathcal{X}_n^d[\rho]]]\\
 &\myequ{\ref{equ:ODpartition}} \sum_{B \mid \bar{B}} ( - \text{OD} [\mathcal{X}_n^d[\rho]])\\
 &=(2^{n-1}-1) (- \text{OD}[ \mathcal{X}_n^d[\rho]])\\
 &\myequ{\ref{equ:ODpartition}}(2^{n-1}-1) \Lambda_A \otimes I_{\bar{A}} [\text{OD} [ \mathcal{X}_n^d[\rho]]], \quad \forall A.
 \end{align*}
\end{proof}

Lemma \ref{lemmaODT} is the same assertion as Lemma \ref{lemmaOD} but for \(\tilde{\sigma}_x^A \circ T_A\) instead of \(\Lambda_A\). The proof is also identical and can be carried through by systematically replacing \(\Lambda_A\) by \(\tilde{\sigma}_x^A \circ T_A\), setting \(d=2\) and noting that \(\mathcal{X}_n^2 \equiv \mathcal{X}_n\) in the above proof. Only equation \ref{equ:ODpartition} has to be adapted to
\begin{equation}
\tilde{\sigma}_x^A\circ T_A \otimes \OI_{\bar{A}} [\text{OD} [\mathcal{X}_n[\rho]]] = \text{OD}[\mathcal{X}_n[\rho]]
\end{equation}
since \(\tilde{\sigma}_x^A \circ T_A \otimes \OI_{\bar{A}}\) acts as the identity on \(\text{OD} [ \mathcal{X}_n[\rho]]\), not as minus the identity.

\section{A 3-qutrit PPT across all cuts GME family\label{app.qutritppt}}

Here we define a class of 3-qutrit GHZ-like states that were first introduced in \cite{hubersengupta} and were found in that same article 
for some parameter range to be GME although being positive under partial transpose (PPT) across all cuts. Following the footsteps
of \cite{hubersengupta}, for a given subset \(A \subset \{1,2,3\}\) and numbers \(x,y \in \{0,1,2\}, \; \lambda \in \mathbb{R}_+\) we first define
\begin{equation}
\ket{A,x,y,\lambda}:= \prod_{i \in A} \sigma^{(x,y)}_{\{i\}} \otimes \I_{\bar{\{i\}}}( \sqrt{\lambda} \ket{xxx}+\sqrt{\lambda^{-1}} \ket{yyy}),
\end{equation} 
where the swap matrix \(\sigma^{(x,y)}\) is defined by \(\sigma^{(x,y)}:= \ket{x} \bra{y} + \ket{y} \bra{x}\). We denote the
density matrix associated to such a (unnormalized) state by
\begin{equation}
P_A(x,y,\lambda):= \ket{A,x,y,\lambda} \bra{A,x,y,\lambda}.
\end{equation} 

We then define
\begin{equation}
E(\lambda_1,\lambda_2,\lambda_3):= 3 \ket{\text{GHZ}_3} \bra{\text{GHZ}_3} + \sum_{i=1,2,3} \sum_{x<y} \sum_{y=1,2} P_{\{i\}}(x,y,\lambda_i),
\end{equation}
with \(\ket{\text{GHZ}_3}= \frac{1}{\sqrt{3}} (\ket{000}+\ket{111}+\ket{222})\), which finally gives us the normalized state
\begin{equation}
\rho(\lambda_1,\lambda_2,\lambda_3):= \frac{E(\lambda_1,\lambda_2,\lambda_3)}{\Tr(E(\lambda_1,\lambda_2,\lambda_3))}.
\end{equation}

For the main text, the states of interest are the ones where \(\lambda_1=\lambda_2=\lambda_3=\lambda\), which we denote by \(\rho(\lambda)\), i.e. \(\rho(\lambda):= \rho(\lambda,\lambda,\lambda)\).

Alternatively, note that as
\begin{equation}
\begin{split}
\ket{\{1\},x,y,\lambda_1}&= \sqrt{\lambda_1} \ket{yxx}+\sqrt{\lambda_1^{-1}} \ket{xyy}\\
\ket{\{2\},x,y,\lambda_2}&= \sqrt{\lambda_2} \ket{xyx}+\sqrt{\lambda_2^{-1}} \ket{yxy}\\
\ket{\{3\},x,y,\lambda_3}&= \sqrt{\lambda_3} \ket{xxy}+\sqrt{\lambda_3^{-1}} \ket{yyx},
\end{split}
\end{equation}
we get by defining
\begin{equation}
\begin{aligned}
&\ket{\psi_1}:= \ket{\{3\},0,1,a}=\sqrt{a} \ket{001}+\sqrt{a^{-1}} \ket{110}; &&\qquad \quad \ket{\psi_6}:= \ket{\{1\},1,2,b}=\sqrt{b} \ket{211}+\sqrt{b^{-1}} \ket{122}\\
&\ket{\psi_2}:= \ket{\{2\},0,1,a}=\sqrt{a} \ket{010}+\sqrt{a^{-1}} \ket{101}; &&\qquad \quad \ket{\psi_7}:= \ket{\{3\},0,2,c^{-1}}=\sqrt{c^{-1}} \ket{002}+\sqrt{c} \ket{220}\\
&\ket{\psi_3}:= \ket{\{1\},0,1,a}=\sqrt{a} \ket{100}+\sqrt{a^{-1}} \ket{011}; &&\qquad \quad \ket{\psi_8}:= \ket{\{2\},0,2,c^{-1}}=\sqrt{c^{-1}} \ket{020}+\sqrt{c} \ket{202}\\
&\ket{\psi_4}:= \ket{\{3\},1,2,b}=\sqrt{b} \ket{112}+\sqrt{b^{-1}} \ket{221}; &&\qquad \quad \ket{\psi_9}:= \ket{\{1\},0,2,c^{-1}}=\sqrt{c^{-1}} \ket{200}+\sqrt{c} \ket{022}\\
&\ket{\psi_5}:= \ket{\{2\},1,2,b}=\sqrt{b} \ket{121}+\sqrt{b^{-1}} \ket{212}; &&\qquad \quad \ket{\psi_{10}}:= \ket{000}+\ket{111}+\ket{222},\\
\end{aligned}
\end{equation}
for given \(a,b,c \in \mathbb{R}_+\), that
\begin{equation}
\tilde{\rho}:= \sum_{i=1}^{10} \ket{\psi_i} \bra{\psi_i}= E(a,b,c^{-1});
\end{equation}
which is the state obtained in Eq.(20) of Ref. \cite{relaxations} for \(p=0\). And hence by setting \(a=b=c^{-1}=\lambda\) we can write 
our states \(\rho(\lambda)\) as
\begin{equation}
\rho(\lambda)= \frac{\tilde{\rho}}{\Tr(\tilde{\rho})}.
\end{equation}
Finally note that our criteria in the main text also detects \(\rho(\lambda_1,\lambda_2,\lambda_3)\) for other instances than \(\lambda=\lambda_1=\lambda_2=\lambda_3\) but for the sake of simplicity we only consider the \(\rho(\lambda)= \rho(\lambda,\lambda,\lambda)\) cases in the main text.

\end{document}